\newtheorem{claim}{Claim}
\newtheorem{theorem}{Theorem}
\newtheorem{definition}{Definition}
\newtheorem{lemma}{Lemma}
\newtheorem{remark}{Remark}
\newcommand{\Ber}[1]{\mathcal{B}\left(#1\right)} 
\newcommand{\Cdel}[1]{C_{\delta_{#1}}}
\newcommand{\dongningstyle}{true}
\newcommand{\fordiscussion}{false}
        \newcommand{\RVEC}[1]{\boldsymbol{\uppercase{#1}}}
        \newcommand{\RMAT}[1]{\underline{\boldsymbol{\uppercase{#1}}}}
        \newcommand{\RSCA}[1]{\uppercase{#1}}
        \newcommand{\VEC}[1]{\boldsymbol{\lowercase{#1}}}
        \newcommand{\MAT}[1]{\boldsymbol{\underline{\lowercase{#1}}}}
        \newcommand{\RVEC}[1]{\boldsymbol{#1}}
        \newcommand{\RMAT}[1]{\boldsymbol{\mathrm{\uppercase{#1}}}}
        \newcommand{\RSCA}[1]{\uppercase{#1}}
        \newcommand{\VEC}[1]{\boldsymbol{\lowercase{#1}}}
        \newcommand{\MAT}[1]{\boldsymbol{\underline{\lowercase{#1}}}}
        \newcommand{\NONUM}{ }
        \newcommand{\NONUM}{\nonumber}
\xdef\csname rv\x\endcsname{\noexpand\RVEC{\x}}
\xdef\csname rm\x\endcsname{\noexpand\RMAT{\x}}
\xdef\csname r\x\endcsname{\noexpand\RSCA{\x}}
\xdef\csname vv\x\endcsname{\noexpand\VEC{\x}}
\xdef\csname mm\x\endcsname{\noexpand\MAT{\x}}
\xdef\csname rv\x\endcsname{\noexpand\RVEC{\x}}
\xdef\csname rm\x\endcsname{\noexpand\RMAT{\x}}
\xdef\csname vv\x\endcsname{\noexpand\VEC{\x}}
\xdef\csname mm\x\endcsname{\noexpand\MAT{\x}}
\newcommand{\Prob}[1]{\mathsf{P}\left( #1 \right)}
\newcommand{\MI}[1]{\ensuremath{I \left( #1 \right)}}
\newcommand{\CMI}[2]{\ensuremath{I \left(\left. #1 \right| #2 \right)}}
\newcommand{\CMIR}[2]{\ensuremath{I \left( #1 \left| #2 \right. \right)}}
\newcommand{\ENT}[1]{\ensuremath{H\left( #1 \right) }}
\newcommand{\CENT}[2]{\ensuremath{H\left( \left. #1 \right| #2 \right) }}
\begin{document}

\title{Binary Fading Interference Channel with No CSIT}


\author{Alireza~Vahid,
        Mohammad~Ali~Maddah-Ali,
				A.~Salman~Avestimehr,
        and~Yan~Zhu
        \thanks{A. Vahid is with the School of Electrical and Computer Engineering, Duke University, Durham, NC, USA. Email: {\sffamily alireza.vahid@duke.edu}.}
\thanks{Mohammad~Ali~Maddah-Ali is with the Department of Electrical Engineering, Sharif University of Technology, Tehran, Iran. Email: {\sffamily maddah\_ali@sharif.edu}.}
\thanks{A. S. Avestimehr is with the School of Electrical and Computer Engineering, University of Southern California, Los Angeles, CA, USA. Email: {\sffamily avestimehr@ee.usc.edu}.}
\thanks{Yan~Zhu is with Aerohive Networks Inc., Sunnyvale, CA, USA. Email: {\sffamily zhuyan79@gmail.com}.}
\thanks{Preliminary parts of this work was presented in the 2014 International Symposium on Information Theory (ISIT)~\cite{AlirezaNoCSIT}.}
\thanks{Copyright (c) 2014 IEEE.}
}

\maketitle
\begin{abstract}
We study the capacity region of the two-user Binary Fading (or Erasure) Interference Channel where the transmitters have no knowledge of the channel state information. We develop new inner-bounds and outer-bounds for this problem. We identify three regimes based on the channel parameters: weak, moderate, and strong interference regimes. Interestingly, this is similar to the generalized degrees of freedom  of the two-user Gaussian interference channel where transmitters have perfect channel knowledge. We show that for the weak interference regime, treating interference as erasure is optimal while for the strong interference regime, decoding interference is optimal. For the moderate interference regime, we provide new inner and outer bounds. The inner-bound is based on a modification of the Han-Kobayashi scheme for the erasure channel, enhanced by time-sharing. We study the gap between our inner-bound and our outer-bounds for the moderate interference regime and compare our results to that of the Gaussian interference channel.

Deriving our new outer-bounds has three main steps. We first create a contracted channel that has fewer states compared to the original channel, in order to make the analysis tractable. We then prove the Correlation Lemma that shows an outer-bound on the capacity region of the contracted channel also serves as an outer-bound for the original channel. Finally using the Conditional Entropy Leakage Lemma, we derive our outer-bound on the capacity region of the contracted channel.
\end{abstract}


\begin{IEEEkeywords}
Interference channel, binary fading, capacity, channel state information, no CSIT, packet collision.
\end{IEEEkeywords}

\section{Introduction}
\label{Sec:Introduction}

The two-user Interference Channel (IC) introduced in~\cite{ahlswede1974capacity} is a canonical example to study the impact of interference in communication networks. There exists an extensive body of work on this problem under various assumptions (\emph{e.g.},~\cite{han1981new,sato1977two,ElGamal:it82,van1994some,Carleial,carleial1983outer,Etkin,Suh,AlirezaFB}). In this work, we focus on a specific configuration of this network, named the two-user Binary Fading Interference Channel (BFIC) as depicted in Fig.~\ref{Fig:detIC} in which the channel gains at each time instant are in the binary field according to some Bernoulli distribution. The input-output relation of this channel at time instant $t$ is given by 
\begin{equation} 
\label{eq:receivedsignal}
Y_i[t] = G_{ii}[t] X_i[t] \oplus G_{\bar{i}i}[t] X_{\bar{i}}[t], \quad i = 1, 2,
\end{equation}
where $\bar{i} = 3 - i$, $G_{ii}[t], G_{\bar{i}i}[t] \in \{ 0, 1\}$, and all algebraic operations are in $\mathbb{F}_2$. This model was first introduced in~\cite{vahid2011interference}.

\begin{figure}[ht]
\centering
\includegraphics[height = 3.5cm]{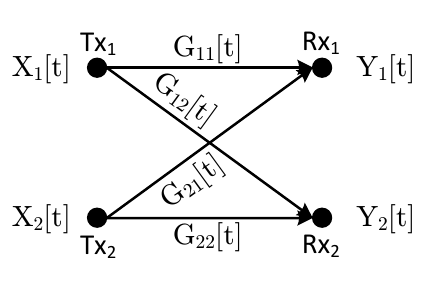}
\caption{Two-user Binary Fading Interference Channel (BFIC).\label{Fig:detIC}}
\end{figure}

The motivation for studying the Binary Fading (or Erasure) Interference Channel is twofold. First as demonstrated in~\cite{vahid2013communication}, it provides a simple yet useful physical layer abstraction for wireless packet networks in which whenever a collision occurs, the receiver can store its received analog signal and utilize it for decoding the packets in the future (for example, by successive interference cancellation techniques). In this context, the binary fading model is motivated by a shadow fading environment in which each link is either ``on'' or ``off'' (according to the shadow fading distribution), and the multiple access (MAC) is modeled such that if two signals are transmitted simultaneously and the links between the corresponding transmitters and the receiver are not in deep fade, then a linear combination of the signals is available to the receiver. The study of the BFIC in~\cite{vahid2013communication} has led to several coding opportunities that can be utilized by the transmitters to exploit the available signal at the receivers for interference management. Moreover, this model allows researchers to focus on other interesting challenges in interference channels such as spatial correlation~\cite{vahid2016does} and locality of channel state knowledge~\cite{vahid2015impact,vahid2016two}.

The second motivation for studying the BFIC is that it can be a first step towards understanding the capacity of the fading interference channels with no knowledge of the channel state information at the transmitters (CSIT). This model was used in~\cite{aggarwal2009ergodic,Guo} to derive the capacity of the one-sided interference channel (also known as Z-Channel). Motivated by the deterministic approach~\cite{ADT10}, a layered erasure broadcast
channel model was introduced in~\cite{tse2012fading} to approximate the capacity of fading broadcast channels.  One can view our Binary Fading model as the model introduced in~\cite{tse2012fading,Guo} with a single layer.

In this work, we consider the two-user BFIC under the no channel state information at the transmitters assumption. In this no CSIT model, the transmitters are only aware of the distributions of the channel gains but not the actual realizations. We develop new inner-bounds and outer-bounds for this problem and we identify three regimes based on the channel parameters: weak, moderate, and strong interference regimes. For the weak and the strong interfrence regimes, we show that the entire capacity region is achieved by applying point-to-point erasure codes with appropriate rates at each transmitter, and using either treat-interference-as-erasure or interference-decoding at each receiver. For the moderate interference regime, we provide new inner and outer bounds. The inner-bound is based on a modification of the Han-Kobayashi scheme for the erasure channel, enhanced by time-sharing. In the moderate interference regime, the inner-bounds and the outer-bounds do not match. We provide some further insights and compare this problem to the two-user static (non-fading) Gaussian interference channel where transmitters have perfect channel knowledge.  

To derive the outer-bound, we incorporate two key lemmas. The first lemma, the Conditional Entropy Leakage Lemma, establishes how much information is leaked from each transmitter to the unintended receiver. The second lemma, the Correlation Lemma, shows that if the channel gains are correlated under a given set of conditions, the capacity region cannot be smaller than the case of independent channel gains. Using the Correlation Lemma, we create a contracted channel that has fewer states as opposed to the original channel and hence, the problem becomes tractable. Then, using the Conditional Entropy Leakage Lemma, we derive an outer-bound on the capacity region of the contracted channel which in turn, serves as an outer-bound for the original channel.





The rest of the paper is organized as follows. In Section~\ref{Sec:Problem}, we formulate our problem. In Section~\ref{Sec:Main}, we present our main results. Section~\ref{Sec:Converse} is dedicated to deriving the outer-bound. We describe our achievability strategy in Section~\ref{Sec:Achievability}. Section~\ref{Sec:Conclusion} concludes the paper and describes future directions.

\section{Problem Setting}
\label{Sec:Problem}

We consider the two-user Binary Fading Interference Channel (BFIC) as illustrated in Fig.~\ref{Fig:detIC}. The channel gain from transmitter ${\sf Tx}_i$ to receiver ${\sf Rx}_j$ at time instant $t$ is denoted by $G_{ij}[t]$, $i,j \in \{1,2\}$. We assume that the channel gains are either $0$ or $1$ (\emph{i.e.} $G_{ij}[t] \in \{0,1\}$), and they are distributed as independent Bernoulli random variables (independent from each other and \emph{over time}). Furthermore, we consider the symmetric setting where 
\begin{align}
G_{ii}[t] \overset{d}\sim \mathcal{B}(p_d) \quad \text{and} \quad G_{i\bar{i}}[t] \overset{d}\sim \mathcal{B}(p_c),
\end{align}
for $0 \leq p_d,p_c \leq 1$, $\bar{i} = 3 - i$, and $i = 1,2$. We define $q_d \overset{\triangle}= 1 - p_d$ and $q_c \overset{\triangle}= 1 - p_c$. 

At each time instant $t$, the transmit signal at ${\sf Tx}_i$ is denoted by $X_i[t] \in \{ 0, 1 \}$, $i = 1, 2$, and the received signal at ${\sf Rx}_i$ is given by
\begin{equation} 
Y_i[t] = G_{ii}[t] X_i[t] \oplus G_{\bar{i}i}[t] X_{\bar{i}}[t], \quad i = 1, 2,
\end{equation}
where all algebraic operations are in $\mathbb{F}_2$. Due to the nature of the channel gains, a total of $16$ channel realizations may occur at any given time instant as given in Table~\ref{Table:16Cases}.

The channel state information (CSI) at time instant $t$ is denoted by the quadruple 
\begin{align}
G[t] = (G_{11}[t], G_{12}[t], G_{21}[t], G_{22}[t]).
\end{align}

\begin{table*}[t]
\caption{All possible channel realizations; solid arrow from transmitter ${\sf Tx}_i$ to receiver ${\sf Rx}_j$ indicates that $G_{ij}[t] = 1$.}
\centering
\begin{tabular}{| c | c | c | c | c | c | c | c |}
\hline
ID		 & ch. realization   & ID		 & ch. realization & ID		 & ch. realization   & ID		 & ch. realization \\ [0.5ex]

\hline

\raisebox{18pt}{$1$}    &    \includegraphics[height = 1.6cm]{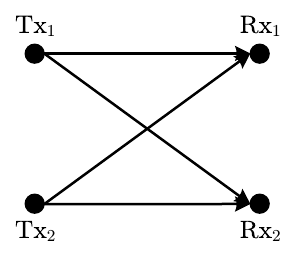}    &  \raisebox{18pt}{$2$}    &    \includegraphics[height = 1.6cm]{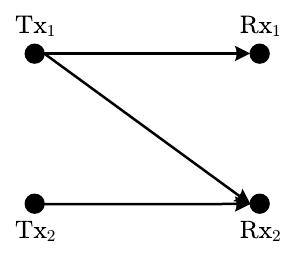}	&	\raisebox{18pt}{$3$}    &    \includegraphics[height = 1.6cm]{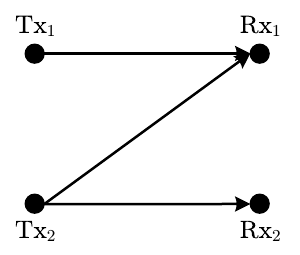}    &  \raisebox{18pt}{$4$}    &    \includegraphics[height = 1.6cm]{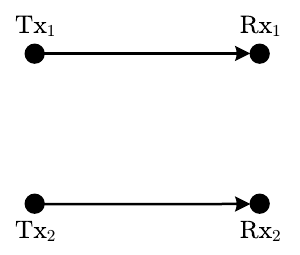} \\

\hline

\raisebox{18pt}{$5$}    &    \includegraphics[height = 1.6cm]{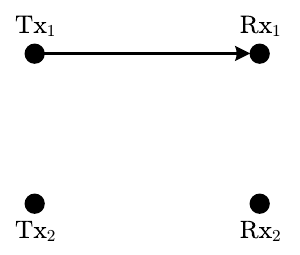}    &  \raisebox{18pt}{$6$}    &    \includegraphics[height = 1.6cm]{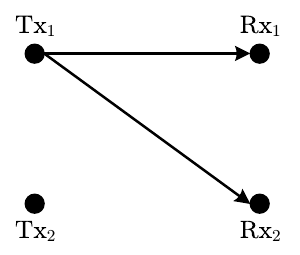}	&	\raisebox{18pt}{$7$}    &    \includegraphics[height = 1.6cm]{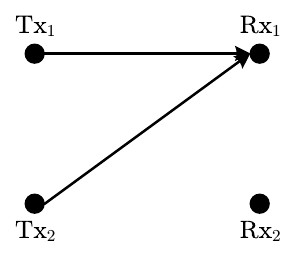}    &  \raisebox{18pt}{$8$}    &    \includegraphics[height = 1.6cm]{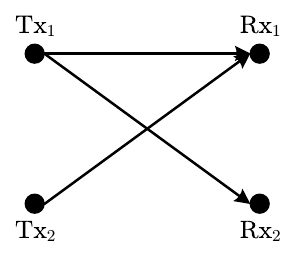} \\

\hline

\raisebox{18pt}{$9$}    &    \includegraphics[height = 1.6cm]{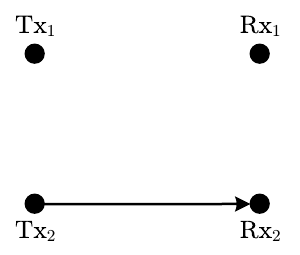}    &  \raisebox{18pt}{$10$}    &    \includegraphics[height = 1.6cm]{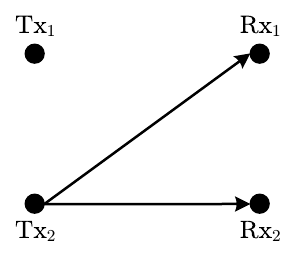}	&	\raisebox{18pt}{$11$}    &    \includegraphics[height = 1.6cm]{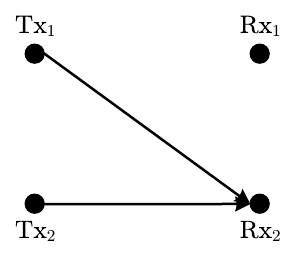}    &  \raisebox{18pt}{$12$}    &    \includegraphics[height = 1.6cm]{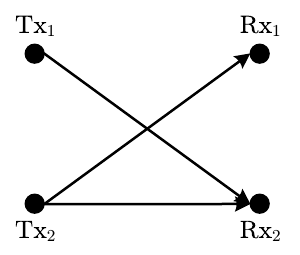} \\

\hline

\raisebox{18pt}{$13$}    &    \includegraphics[height = 1.6cm]{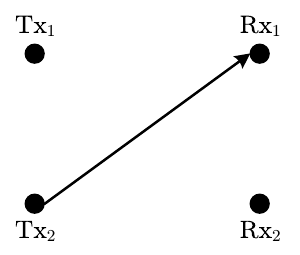}    &  \raisebox{18pt}{$14$}    &    \includegraphics[height = 1.6cm]{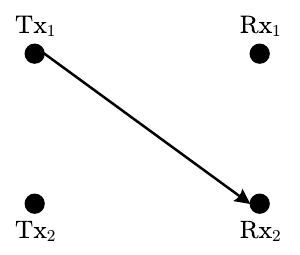}	&	\raisebox{18pt}{$15$}    &    \includegraphics[height = 1.6cm]{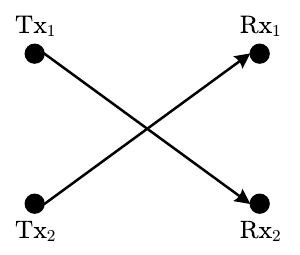}    &  \raisebox{18pt}{$16$}    &    \includegraphics[height = 1.6cm]{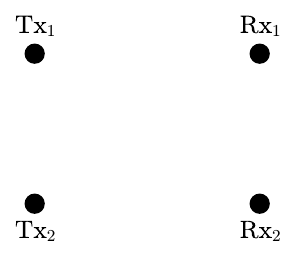} \\

\hline

\end{tabular}
\label{Table:16Cases}
\end{table*}

We use the following notations in this paper. We use capital letters to denote random variables (RVs), \emph{e.g.} $G_{ij}[t]$ is a random variable at time instant $t$, and small letters denote the realizations, \emph{e.g.} $g_{ij}[t]$ is a realization of $G_{ij}[t]$. For a natural number $k$, we set 
\begin{align}
G^{k} = \left[ G[1], G[2], \ldots, G[k] \right]^{\top}.
\end{align}

Finally, we set 
{\small \begin{align}
& G_{ii}^t X_i^t \oplus G_{\bar{i}i}^t X_{\bar{i}}^t \\
&~= \left[ G_{ii}[1] X_i[1] \oplus G_{\bar{i}i}[1] X_{\bar{i}}[1], \ldots, G_{ii}[t] X_i[t] \oplus G_{\bar{i}i}[t] X_{\bar{i}}[t] \right]^{\top}. \nonumber 
\end{align}}


In this paper, we consider the no CSIT model for the available channel state information at the transmitters. In this model, we assume that transmitters only know the distribution from which the channel gains are drawn, but not the actual realizations of them. Furthermore, we assume that receiver $i$ has instantaneous knowledge of $G_{ii}[t]$ and $G_{\bar{i}i}[t]$ (\emph{i.e.} the incoming links to receiver $i$), $i=1,2$.  

Consider the scenario in which ${\sf Tx}_i$ wishes to reliably communicate message $\hbox{W}_i \in \{ 1,2,\ldots,2^{n R_i}\}$ to ${\sf Rx}_i$ during $n$ uses of the channel, $i = 1,2$. We assume that the messages and the channel gains are {\it mutually} independent and the messages are chosen uniformly. For each transmitter ${\sf Tx}_i$, $i=1,2$, under no CSIT assumption, let message $\hbox{W}_i$ be encoded as $X_i^n$ using the following equation 
\begin{align}
X_i[t] = f_{i,t}\left( \hbox{W}_i \right), \qquad t=1,2,\ldots,n,
\end{align}
where $f_{i,.}\left(.\right)$ is the encoding function at transmitter ${\sf Tx}_i$. 

Receiver ${\sf Rx}_i$ is only interested in decoding $\hbox{W}_i$, and it will decode the message using the decoding function $\widehat{\hbox{W}}_i = \varphi_i\left(Y_i^n, G_{ii}^n, G_{\bar{i}i}^n\right)$. An error occurs when $\widehat{\hbox{W}}_i \neq \hbox{W}_i$. The average probability of decoding error is given by
\begin{equation}
\label{}
\lambda_{i,n} = \mathbb{E}[P[\widehat{\hbox{W}}_i \neq \hbox{W}_i]], \hspace{5mm} i = 1, 2,
\end{equation}
and the expectation is taken with respect to the random choice of the transmitted messages $\hbox{W}_1$ and $\hbox{W}_2$. A rate tuple $(R_1,R_2)$ is said to be achievable, if there exist encoding and decoding functions at the transmitters and the receivers respectively, such that the decoding error probabilities $\lambda_{1,n},\lambda_{2,n}$ go to zero as $n$ goes to infinity. The capacity region $\mathcal{C}\left( p_d, p_c \right)$ is the closure of all achievable rate tuples. In the next section, we present the main results of the paper.


\section{Main Results}
\label{Sec:Main}

In this section, we present our main contributions. We first need to define the symmetric sum-rate.
\begin{definition}
For the two-user BFIC with no CSIT, the symmetric sum-rate is defined as
\begin{align}
& R_{\mathrm{sym}}\left( p_d, p_c \right) = R_1 + R_2 \text{~~such~that} \nonumber \\
&~R_1=R_2, ~~ \left( R_1, R_2 \right) \in \mathcal{C}\left( p_d, p_c \right).
\end{align} 
\end{definition}

The following theorems state our main contributions.

\begin{theorem}
\label{THM:MainInner}
For the two-user BFIC with no CSIT the following symmetric sum-rate, $R_{\mathrm{sym}}\left( p_d, p_c \right)$, is achievable:
\begin{itemize}
\item For $0 \leq p_c \leq \frac{p_d}{1+p_d}$: $$2p_dq_c;$$
\item For $\frac{p_d}{1+p_d} \leq p_c \leq p_d$:
$$p_d + p_c - p_d p_c + \frac{p_d - p_c}{2} C_\delta^\ast;$$
where
\begin{align}
  C_\delta^* := \frac{p_dp_c - (p_d - p_c)}{p_dp_c - \frac{p_d - p_c}{2}}.
\end{align}
\item For $p_d \leq p_c \leq 1$:
$$p_d + p_c - p_d p_c.$$
\end{itemize}
\end{theorem}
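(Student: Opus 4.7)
The plan is to construct explicit achievability schemes tailored to each of the three interference regimes. Since the transmitters have no CSIT, the codebooks must be fixed in advance; in all three regimes the plan is for both transmitters to use independent Bernoulli$(1/2)$ random codebooks, with the scheme differentiated only by the decoder strategy at each receiver.

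For the weak interference regime ($0 \le p_c \le p_d/(1+p_d)$), the plan is to have each receiver treat interference as noise, which, since the interfering signal is Bernoulli$(1/2)$ and the operations are in $\mathbb{F}_2$, is effectively an erasure on every slot where the cross link is on. A direct computation gives
\begin{align*}
I(X_i; Y_i \mid G) &= H(Y_i \mid G) - H(Y_i \mid X_i, G) \\
&= (1 - q_d q_c) - p_c = p_d q_c,
\end{align*}
so each user reliably achieves $p_d q_c$ and the symmetric sum rate is $2 p_d q_c$.

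For the strong interference regime ($p_d \le p_c \le 1$), the plan is instead to have each receiver perform joint MAC decoding of both $X_1$ and $X_2$. The MAC bounds at receiver $i$ with Bernoulli$(1/2)$ inputs are
\begin{align*}
R_i &\le I(X_i; Y_i \mid X_{\bar{i}}, G) = p_d, \\
R_{\bar{i}} &\le I(X_{\bar{i}}; Y_i \mid X_i, G) = p_c, \\
R_1 + R_2 &\le H(Y_i \mid G) = 1 - q_d q_c,
\end{align*}
with symmetric bounds at the other receiver. Intersecting these constraints and imposing $R_1 = R_2$, the sum constraint is the binding one, yielding the symmetric sum rate $p_d + p_c - p_d p_c$.

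For the moderate interference regime ($p_d/(1+p_d) \le p_c \le p_d$), the plan is to adapt Han--Kobayashi coding to the binary fading setting and augment it with a time-sharing parameter $\delta \in [0,1]$. Each transmitter splits its message into a common part, to be jointly decoded at both receivers, and a private part, treated as noise at the unintended receiver; the split is parameterized by $\delta$. Expressing the achievable symmetric sum rate as a function of $\delta$ and optimizing yields the stated expression with optimizer $\delta^{*} = C_\delta^{*}$. The main obstacle will be the bookkeeping needed to verify that the optimizer indeed takes this closed form; two boundary consistency checks help confirm the calculation: at $p_c = p_d/(1+p_d)$ one obtains $C_\delta^{*} = 0$ and the expression collapses to $2p_d q_c$, matching the weak regime; at $p_c = p_d$ one obtains $C_\delta^{*} = 1$ and it collapses to $p_d + p_c - p_d p_c$, matching the strong regime.
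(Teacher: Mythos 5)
Your treatment of the weak- and strong-interference regimes is correct and matches the paper's approach essentially line for line: Bernoulli$(1/2)$ inputs with treat-interference-as-erasure gives $I(X_i;Y_i\mid G)=p_dq_c$ per user, and joint MAC decoding at each receiver gives the intersection of the two MAC regions whose dominant face yields $R_1+R_2\le p_d+p_c-p_dp_c$. (As an aside, the statement that ``the sum constraint is the binding one'' at the symmetric point deserves a one-line check against the individual constraint $R_i\le p_d$; this is true whenever $q_dp_c\le p_d$, which is the case precisely when the segment between $(p_d,q_dp_c)$ and $(q_dp_c,p_d)$ is on the boundary, but neither you nor the paper spells this out.)

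The moderate-interference regime, however, is where essentially all of the work lies, and your proposal does not actually contain a proof there. You state that one should ``adapt Han--Kobayashi to the binary setting, parameterize by $\delta$, and optimize,'' and then assert that the optimizer ``yields the stated expression.'' This hides the entire construction. What the paper actually does is: (i) realize the input as $X=\max(X_c,X_p)$ with $X_c\sim\mathrm{Ber}(\delta/2)$ and $X_p\sim\mathrm{Ber}\bigl(1-\tfrac{1}{2-\delta}\bigr)$, so that the private-message ``fraction'' is $C_\delta=\frac{2-\delta}{2}H\bigl(\tfrac{1}{2-\delta}\bigr)$; (ii) introduce a time-sharing sequence $Q[t]$ that alternates the roles of the two users, running with split parameters $(\delta_1,\delta_2)$ half the time and $(\delta_2,\delta_1)$ the other half; (iii) compute $M_p,M_{c1},M_{c2}$ explicitly via the chain rule, then fix $\delta_2=1$ (no private message for the user currently in the ``common-only'' role) and optimize over $C_{\delta_1}\in[0,1]$; (iv) separately argue a matching upper bound over all $(\delta_1,\delta_2)$ using $\gamma(\delta_1,\delta_2)\le C_{\delta_1}+C_{\delta_2}$. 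None of this is present in your sketch, and without steps (ii)--(iv) there is no reason the closed form $C_\delta^*$ should pop out. There is also a small but telling slip: you write that ``the optimizer $\delta^*=C_\delta^*$,'' conflating the split parameter $\delta$ with the derived quantity $C_\delta$; the paper's optimizer is $C_{\delta_1}=C_\delta^*$, not $\delta_1=C_\delta^*$, and since $C_\delta$ is a nontrivial decreasing function of $\delta$ the distinction matters. Your two boundary-consistency checks at $p_c=p_d/(1+p_d)$ and $p_c=p_d$ are correct and are a useful sanity check, but they confirm the endpoints of the claimed formula rather than establishing that an HK scheme achieves it.
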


The following theorem establishes the outer-bound.

\begin{theorem}
\label{THM:MainOuter}
The capacity region of the two-user BFIC with no CSIT, $\mathcal{C}\left( p_d, p_c \right)$, is contained in $\bar{\mathcal{C}}\left( p_d, p_c \right)$ given by:
\begin{itemize}
\item For $0 \leq p_c \leq \frac{p_d}{1+p_d}$: 
\begin{equation}
\label{Eq:OuterRegime1}
\left\{ \begin{array}{ll}
\hspace{-1.5mm} \left( R_1, R_2 \right) \left| \parbox[c][4em][c]{0.22\textwidth} {$0 \leq R_i \leq p_d \qquad i=1,2$ \\
$R_i + \beta R_{\bar{i}} \leq \beta p_d + p_c - p_d p_c$} \right. \end{array} \right\}
\end{equation}
where
\begin{align}
\beta = \frac{p_d-p_c}{p_dp_c}.
\end{align}
\item For $\frac{p_d}{1+p_d} \leq p_c \leq p_d$:
\begin{equation}
\label{Eq:OuterRegime2}
\left\{ \begin{array}{ll}
\hspace{-1.5mm} \left( R_1, R_2 \right) \left| \parbox[c][4em][c]{0.27\textwidth} {$0 \leq R_i \leq p_d \qquad i=1,2$ \\
$R_i + R_{\bar{i}} \leq 2 p_c$ \\
$R_i + \frac{p_d}{p_c} R_{\bar{i}} \leq \frac{p_d}{p_c} \left( p_d + p_c - p_d p_c \right)$} \right. \end{array} \right\}
\end{equation}
\item For $p_d \leq p_c \leq 1$:
\begin{equation}
\label{Eq:OuterRegime3}
\left\{ \begin{array}{ll}
\hspace{-1.5mm} \left( R_1, R_2 \right) \left| \parbox[c][4em][c]{0.22\textwidth} {$0 \leq R_i \leq p_d \qquad i=1,2$ \\
$R_i + R_{\bar{i}} \leq p_d + p_c - p_d p_c$} \right. \end{array} \right\}
\end{equation}
\end{itemize}
\end{theorem}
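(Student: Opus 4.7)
\medskip\noindent\textbf{Proof plan.}
The plan is to establish each of the bounds that appear across the three regimes — the individual rate bound $R_i\le p_d$, the sum-rate bound $R_1+R_2\le 2p_c$ (moderate regime), the sum-rate bound $R_1+R_2\le p_d+p_c-p_dp_c$ (strong regime), and the weighted bounds $R_i+\beta R_{\bar i}\le\beta p_d+p_c-p_dp_c$ and $R_i+(p_d/p_c)R_{\bar i}\le (p_d/p_c)(p_d+p_c-p_dp_c)$ — and then to read off which of them is active on each interval of $p_c$. The point-to-point bound $R_i\le p_d$ is immediate: handing $X_{\bar i}^n$ to receiver $i$ as genie side information reduces the link from $X_i$ to $Y_i$ to a memoryless binary erasure channel with erasure probability $q_d$, and a Fano argument closes it.

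For the remaining bounds I would follow the three-step recipe advertised in the abstract. First, I would invoke the Correlation Lemma to replace the original channel (four mutually independent Bernoulli gains) by a \emph{contracted} channel in which some of the gains are coupled. Since correlating the gains can only enlarge the capacity region, any outer bound on the contracted channel is also an outer bound on the original one, so I am free to choose the coupling that best matches the target inequality. A natural choice is to force the two cross gains (and separately the two direct gains) to share an underlying Bernoulli, so that once a genie reveals a single gain, an entire column of the state pattern is pinned down; this collapses the number of channel state classes while preserving the marginals $p_d$ and $p_c$ that dictate the numerical coefficients of the final bound.

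Next, on the contracted channel I would start from Fano's inequality, $nR_i\le I(W_i;Y_i^n,G_{ii}^n,G_{\bar i i}^n)+n\epsilon_n$, and add the two resulting inequalities with weights chosen by the regime (unit weights for the $2p_c$ and $p_d+p_c-p_dp_c$ bounds, weight $\beta=(p_d-p_c)/(p_dp_c)$ or $p_d/p_c$ for the weighted bounds). I would then telescope each mutual information via the chain rule so that each received-signal entropy is paired with an entropy conditioned on the other user's message. This is the moment at which the Conditional Entropy Leakage Lemma enters: it upper-bounds the unintended-receiver entropy $H(Y_{\bar i}^n\,|\,W_{\bar i},\cdot)$ by a multiple of the corresponding direct-link quantity, and it is precisely this step that injects the factors $\beta$ and $p_d/p_c$ into the coefficients and that pins down the threshold $p_c=p_d/(1+p_d)$ (where $\beta=1$) at which the weak and moderate regions must agree. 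After this substitution the per-letter maxima are attained by an i.i.d.\ uniform input and evaluate to the constants $p_d$, $p_c$, and $p_d+p_c-p_dp_c=\Pr(G_{ii}=1\text{ or }G_{\bar i i}=1)$ that appear in the theorem. The three claimed regions are then obtained by retaining, in each interval of $p_c$, only those bounds that are not already implied by the others.

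\medskip\noindent\textbf{Main obstacle.} The delicate part will be choosing the coupling used in the Correlation Lemma so that the ensuing leakage inequality yields exactly the stated coefficients rather than looser ones. A naive coupling — for example simply forcing $G_{12}=G_{21}$ — does not in general reproduce the correct ratio between $H(Y_{\bar i}^n\,|\,W_{\bar i},\cdot)$ and $H(Y_i^n\,|\,W_{\bar i},\cdot)$, and would leave slack at the two regime boundaries, where the three descriptions of $\bar{\mathcal C}(p_d,p_c)$ must match continuously. Getting the contracted channel right for each of the two weighted bounds, and in particular aligning the Leakage Lemma's coefficient with $\beta$ in the weak regime and with $p_d/p_c$ in the moderate regime, is where the real work lies; the remaining Fano, chain-rule, and single-letterization manipulations are then essentially mechanical.
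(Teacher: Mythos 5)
Your plan correctly names the three ingredients (Correlation Lemma, contracted channel, Conditional Entropy Leakage Lemma) and the Fano/chain-rule skeleton, but the specific contraction you propose is not the one that makes the argument work, and you flag this as the ``main obstacle'' without resolving it --- and that obstacle \emph{is} the proof. You suggest coupling the two cross gains together and the two direct gains together ($G_{12}$ with $G_{21}$, $G_{11}$ with $G_{22}$). The paper instead couples each transmitter's \emph{outgoing} pair so that $\tilde G_{i\bar i}[t]=1 \Rightarrow \tilde G_{ii}[t]=1$; this nestedness collapses the channel to five mutually exclusive states $A,B,C,D,E$ with probabilities $p_dp_c,\,(p_d-p_c),\,q_dp_c,\,q_dp_c,\,q_dq_c$, which is what Lemma~\ref{lemma:decodability} permits (incoming gains at each receiver stay independent) and what Lemma~\ref{Lemma:ConditionalLeakage} requires (its hypothesis is pairwise exclusivity $\Pr[G_i=1,G_j=1]=0$). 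You also misstate where the Leakage Lemma acts: it is not applied to $H(Y_{\bar i}^n\mid W_{\bar i},\cdot)$ directly, but to the state-projected subsignals $\tilde X_{iA}^n,\tilde X_{iB}^n$ (conditioned on $\tilde X_{iC}^n$), with the $G_3$ slot of the lemma played by state $C$; this is what produces $H(\tilde X_{1B}^n\mid\tilde X_{1C}^n,\tilde G^n)\le \beta\, H(\tilde X_{1A}^n\mid\tilde X_{1C}^n,\tilde G^n)$ and hence $\beta=(p_d-p_c)/(p_dp_c)=\Pr[B]/\Pr[A]$ in Regime~I, and the analogous step with an auxiliary $G_S\sim\mathcal B\!\left((p_d-p_c)/(p_dp_c)\right)$ split of $\tilde X_{iA}$ in Regime~II.

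Two further pieces of your plan do not match what is actually needed. The bound $R_i+\tfrac{p_d}{p_c}R_{\bar i}\le \tfrac{p_d}{p_c}(p_d+p_c-p_dp_c)$ in Regime~II is not re-derived by the same recipe; the paper imports it verbatim from the one-sided BFIC outer bound of~\cite{Guo}, and only the $R_1+R_2\le 2p_c$ face is obtained via the contracted channel and the Leakage Lemma. And in Regime~III ($p_c\ge p_d$) the sum bound $R_1+R_2\le p_d+p_c-p_dp_c$ follows from the elementary strong-interference argument (after decoding and stripping its own signal, each receiver has a stronger channel to the other transmitter than that transmitter's own receiver, so it can decode both messages); no leakage step is required there. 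So as written the proposal identifies the right outline but would not reproduce the theorem without (i) replacing the cross/cross, direct/direct coupling with the per-transmitter nested coupling, (ii) rephrasing the Leakage Lemma application in terms of state-restricted subsignals, and (iii) recognising that one of the Regime~II faces comes from an external one-sided-IC bound.
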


\begin{figure}[ht]
\centering
\includegraphics[height = 6.5cm]{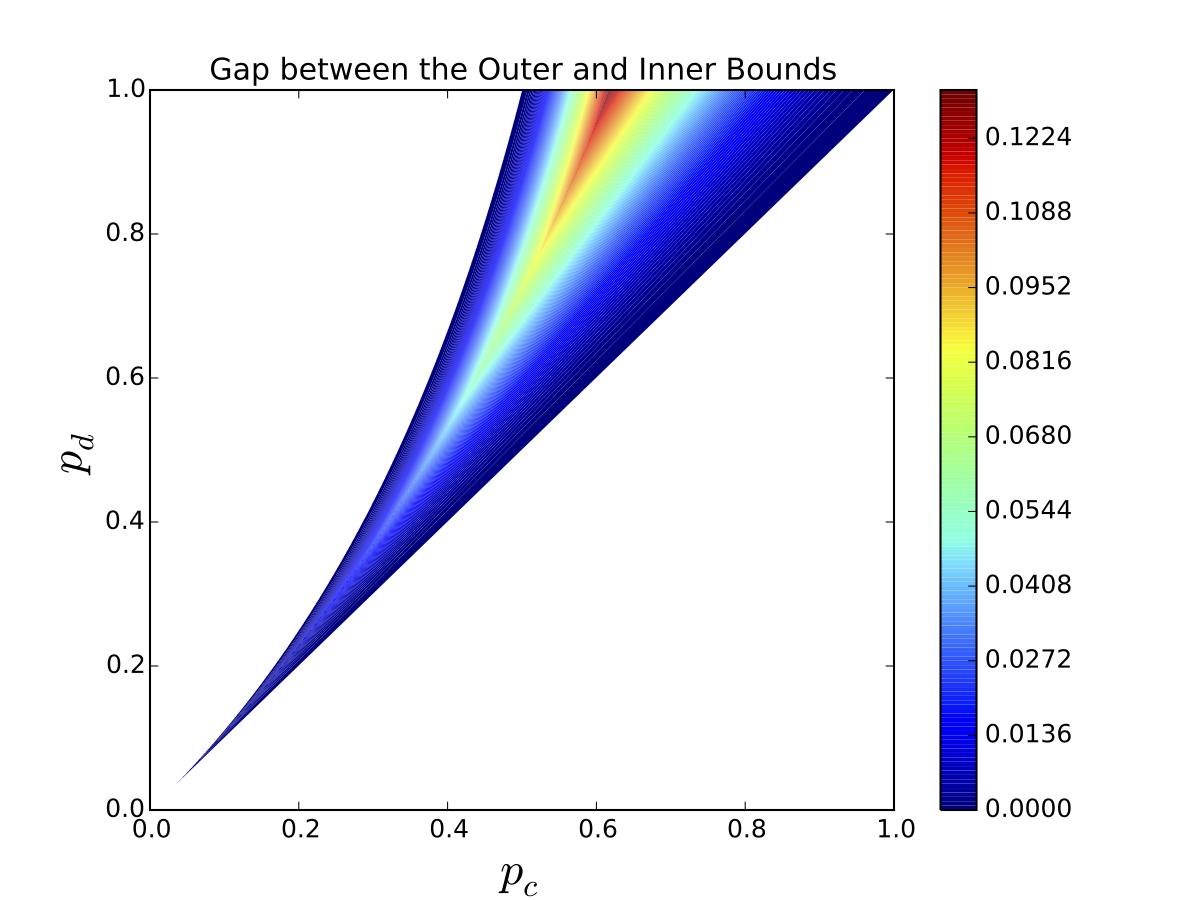}
\caption{The gap between the the inner-bounds and the outer-bounds for the two-user BFIC with no CSIT.\label{Fig:Gap}}
\end{figure}

\begin{figure*}[t]
\centering
\subfigure[]{\includegraphics[height = 4.5 cm]{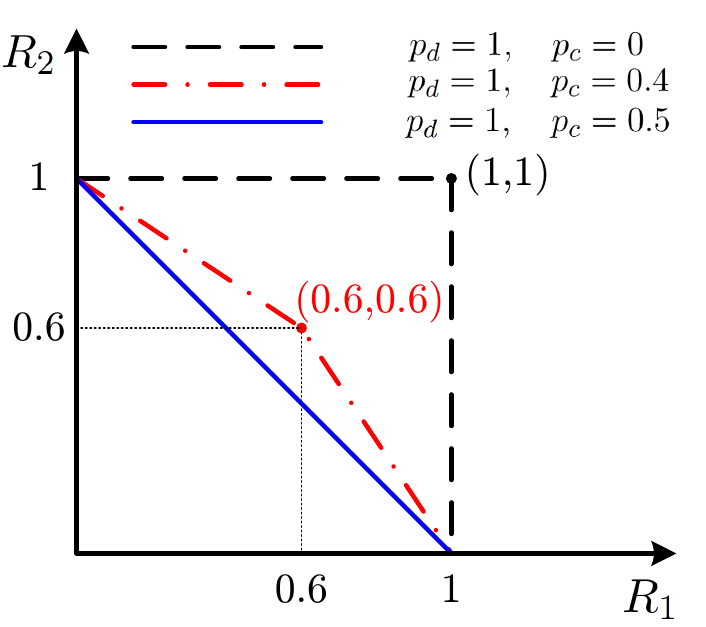}}
\hspace{1.5 in}
\subfigure[]{\includegraphics[height = 4.5 cm]{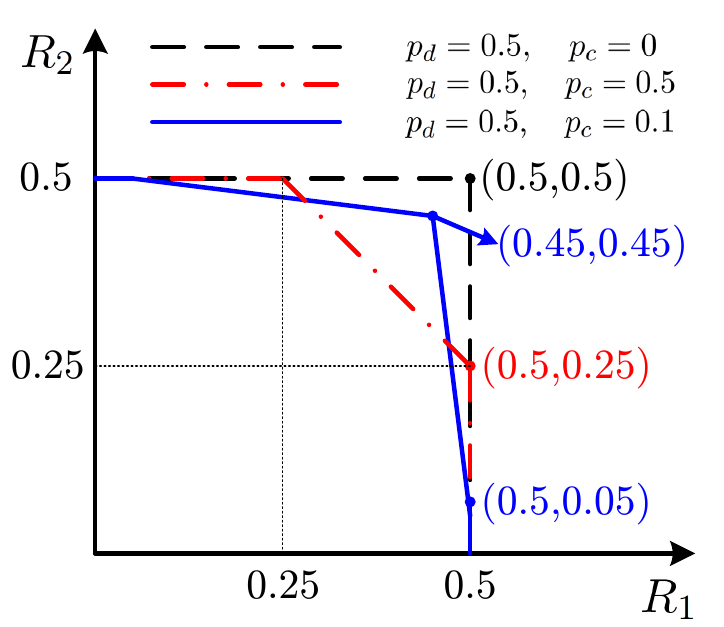}}
\caption{$(a)$ Capacity region for $p_d = 1$ and $p_c = 0, 0.4, 0.5$; and $(b)$ capacity region  for $p_d = 0.5$ and $p_c = 0, 0.1, 0.5$.\label{Fig:RegionHalf}}
\end{figure*}

As we show later, the inner-bounds and the outer-bounds for the two-user BFIC with no CSIT match when $0 \leq p_c \leq \frac{p_d}{1+p_d}$ or $p_d \leq p_c \leq 1$ which correspond to the weak and the strong interference regimes. However, for the moderate interference regime, \emph{i.e.} $\frac{p_d}{1+p_d} \leq p_c \leq p_d$, the bounds do not meet, thus the capacity region remains open. The gap between the the inner-bounds and the outer-bounds is plotted in Fig.~\ref{Fig:Gap} where the bounds match in the white region.

For the weak and the strong interference regimes, the capacity region is obtained by applying point-to-point erasure codes with appropriate rates at each transmitter, and using either treat-interference-as-erasure or interference-decoding at each receiver based on the channel parameters. More precisely for $0 \leq p_c < p_d/\left( 1 + p_d \right)$, the capacity region is obtained by treating interference as erasure, while for $p_d \leq p_c \leq 1$, the capacity region is obtained by interference-decoding (\emph{i.e.} the intersection of the capacity regions of the two multiple access channels at receivers). For the moderate interference regime, the inner-bound is based on a modification of the Han-Kobayashi scheme for the erasure channel, enhanced by time-sharing. The detailed proof of the achievability strategy can be found in Section~\ref{Sec:Achievability}.



Fig.~\ref{Fig:RegionHalf} illustrates the capacity region for several values of $p_d$ and $p_c$. In Fig.~\ref{Fig:RegionHalf}(a) the capacity region is depicted for $p_d = 1$ and $p_c = 0, 0.4, 0.5$. From Fig.~\ref{Fig:RegionHalf}(b), we conclude that unlike the case in Fig.~\ref{Fig:RegionHalf}(a), decreasing $p_c$ does not necessarily enlarge the capacity region. In fact, for $p_d = 0.5$, we have
\begin{align}
\mathcal{C}\left( 0.5, 0.5 \right) &\not\subseteq \mathcal{C}\left( 0.5, 0.1 \right), \nonumber \\
\mathcal{C}\left( 0.5, 0.1 \right) &\not\subseteq \mathcal{C}\left( 0.5, 0.5 \right).
\end{align}

Using the results of~\cite{vahid2013capacity}, we have plotted the capacity region of the two-user Binary Fading IC for $p_d = p_c = 0.5$ under three different scenarios in Fig.~\ref{fig:illustration}. Under the delayed CSIT, we assume that transmitters become aware of all channel realizations with unit delay and receivers have access to instantaneous CSI, \emph{i.e.}
\begin{align}
X_i[t] = f_{i,t}\left( \hbox{W}_i, G^{t-1} \right), \qquad t=1,2,\ldots,n,
\end{align}
and
\begin{align}
\widehat{\hbox{W}}_i = \varphi_i\left(Y_i^n, G^n \right);
\end{align}
and under instantaneous CSIT model, we assume that at time instant $t$, transmitters and receivers have access to $G^t$. As we can see in Fig.~\ref{fig:illustration}, when transmitters have access to the delayed knowledge of \emph{all} links in the network, the capacity region is strictly larger than the capacity region under the no CSIT assumption. 

\begin{figure}[ht]
\centering
\includegraphics[height = 4.5cm]{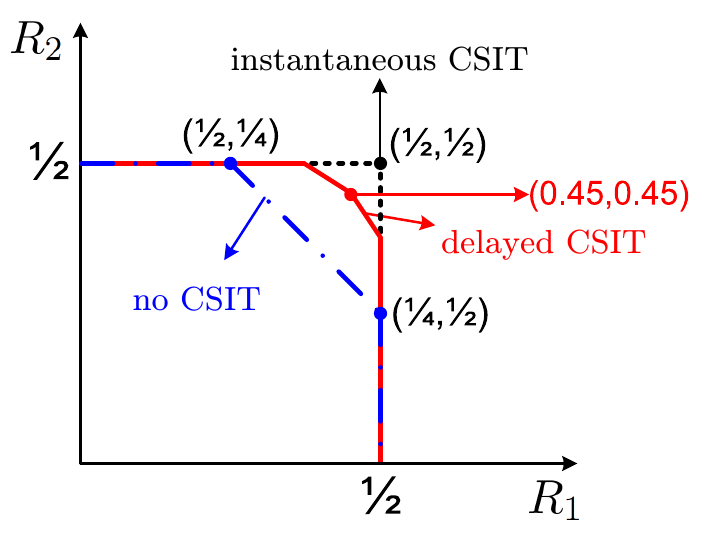}
\caption{\it Capacity region of the two-user Binary Fading IC for $p_d = p_c = 0.5$, with no CSIT, delayed CSIT, and instantaneous CSIT.\label{fig:illustration}}
\end{figure}

To prove Theorem~\ref{THM:MainOuter}, we incorporate two key lemmas as discussed in Section~\ref{Sec:KeyLemmas}. The first step in obtaining the outer-bound is to create a ``contracted'' channel that has fewer states compared to the original channel. Using the Correlation Lemma, we show that an outer-bound on the capacity region of the contracted channel also serves as an outer-bound for the original channel. Finally, using the Conditional Entropy Leakage Lemma, we derive this outer-bound. 

The intuition for the Correlation Lemma was first provided by Sato~\cite{sato1977two}: the capacity region of all interference channels that have the same marginal distributions is the same. We take this intuition and impose a certain spatial correlation among channel gains such that the marginal distributions remain unchanged. For the moderate interference regime, we also incorporate the outer-bounds on the capacity region of the one-sided BFIC with no CSIT~\cite{Guo}.

The rest of the paper is dedicated to the proof of our main contributions. We provide the converse proof in Section~\ref{Sec:Converse}, and we present our achievability strategy in Section~\ref{Sec:Achievability}.


\section{Converse}
\label{Sec:Converse}

In this section, we provide the converse proof of Theorem~\ref{THM:MainOuter} for the no CSIT assumption. We incorporate two lemmas in order to derive the outer-bound that we describe in the following subsection.


\subsection{Key Lemmas}
\label{Sec:KeyLemmas}

\subsubsection{Entropy Leakage Lemma}

Consider a broadcast channel as depicted in Fig.~\ref{Fig:portion} where a transmitter is connected to two receivers through binary fading channels. Suppose $G_i[t]$ is distributed as i.i.d. Bernoulli random variable (\emph{i.e.} $G_i[t] \overset{d}\sim \mathcal{B}(p_i)$) where $0 \leq p_2 \leq p_1 \leq 1$, $i=1,2$. In this channel the received signals are given as
\begin{align}
Y_i[t] = G_i[t] X[t], \qquad i = 1,2,
\end{align}
where $X[t] \in \{ 0, 1\}$ is the transmit signal at time instant $t$. Furthermore, suppose $G_3[t] \overset{d}\sim \mathcal{B}(p_3)$, $0 \leq p_3 \leq 1$, such that
\begin{align}
\label{eq:NoTwoEqualToOne}
\Pr\left[ G_i[t] = 1, G_j[t] = 1 \right] =  0,~i \neq j,~~i,j \in \{ 1,2,3\}.
\end{align} 
We further define
\begin{align}
G_T[t] \overset{\triangle}= \left( G_1[t], G_2[t], G_3[t] \right).
\end{align}
Then, for the channel described above, we have the following lemma.

\begin{figure}[t]
\vspace{-1mm}
\centering
\includegraphics[height = 2.5cm]{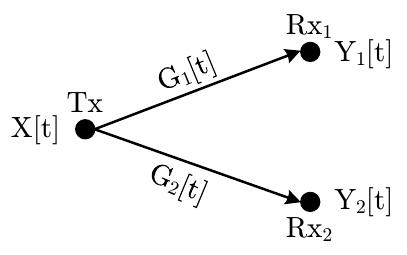}
\caption{A broadcast channel with binary fading links.\label{Fig:portion}}
\vspace{-2mm}
\end{figure}

\begin{lemma}
\label{Lemma:ConditionalLeakage}
[{\it Conditional Entropy Leakage Lemma}] For the channel described above with no CSIT, for $p_1+p_2+p_3 \leq 1$ and \emph{any} input distribution, we have
\begin{align}
\label{eq:leakageeq}
H\left( Y_2^n | G_3^n X^n, G_T^n \right) \geq \frac{p_2}{p_1} H\left( Y_1^n | G_3^n X^n, G_T^n \right).
\end{align}
\end{lemma}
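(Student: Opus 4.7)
I would prove the lemma by reducing it to a monotonicity statement for subsampled conditional entropies of $X^n$. Let $S_i = \{t : G_i[t] = 1\}$ for $i \in \{1,2,3\}$; by the disjointness hypothesis \eqref{eq:NoTwoEqualToOne} these sets are pairwise disjoint. Conditioning on $G_T^n$ determines them, and conditioning on $G_3^n X^n$ reveals exactly the subword $X[S_3]$ and nothing else about $X^n$ (the product $G_3[t]X[t]$ equals $X[t]$ on $S_3$ and $0$ off it). Since $Y_i^n$ equals $X[t]$ on $S_i$ and $0$ elsewhere, and since $X^n$ is independent of $G_T^n$ under no CSIT,
\begin{equation*}
H\!\left(Y_i^n \,\big|\, G_3^n X^n,\, G_T^n\right) \;=\; \mathbb{E}_{G_T^n}\!\left[ H\!\left(X[S_i] \,\big|\, X[S_3]\right) \right], \qquad i \in \{1,2\},
\end{equation*}
where the inner entropy is computed in the (fixed) marginal law of $X^n$.

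Next I would condition on $S_3 = s_3$. The disjointness gives $\Pr[G_i[t]=1 \mid G_3[t]=0] = p_i/(1-p_3)$, and the channel gains are i.i.d.\ across time, so $S_i \cap ([n]\setminus s_3)$ is obtained by including each index of $[n]\setminus s_3$ independently with probability $q_i := p_i/(1-p_3)$. Defining
\begin{equation*}
f_{s_3}(q) \;=\; \mathbb{E}_{S \sim q}\!\left[ H(X[S] \mid X[s_3]) \right],
\end{equation*}
where $S$ is produced by i.i.d.\ Bernoulli$(q)$ sampling on $[n]\setminus s_3$, the ratio $q_i/p_i = 1/(1-p_3)$ is the same for $i=1,2$, so the target inequality \eqref{eq:leakageeq} follows (after averaging over $s_3$) from the claim that $q \mapsto f_{s_3}(q)/q$ is non-increasing on $(0,1]$.

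The monotonicity follows from a chain-rule-plus-coupling argument. Expanding
\begin{equation*}
H(X[S] \mid X[s_3]) \;=\; \sum_{t \in [n]\setminus s_3} \mathbf{1}[t \in S]\, H\!\left( X_t \,\big|\, X[S \cap \{1,\ldots,t-1\}],\, X[s_3] \right)
\end{equation*}
and taking expectation over $S$, the indicator $\mathbf{1}[t \in S]$ is independent of $S \cap \{1,\ldots,t-1\}$, so the factor $\Pr[t \in S] = q$ pulls out of each summand, giving
\begin{equation*}
\frac{f_{s_3}(q)}{q} \;=\; \sum_{t \in [n]\setminus s_3} \mathbb{E}_{T_t}\!\left[ H\!\left(X_t \,\big|\, X[T_t],\, X[s_3]\right) \right],
\end{equation*}
where $T_t$ is an i.i.d.\ Bernoulli$(q)$ subset of $\{1,\ldots,t-1\}\setminus s_3$. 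Each summand is non-increasing in $q$ via the standard coupling $T_t^{(q_1)} \supseteq T_t^{(q_2)}$ for $q_1 \geq q_2$, since enlarging the conditioning set only decreases conditional entropy. I expect the main obstacle to be the bookkeeping in the first step, specifically verifying that under the disjointness hypothesis the side information $(G_3^n X^n, G_T^n)$ discloses exactly $X[S_3]$ about the input and that $S_1 \cup S_2$ is disjoint from $S_3$, so that $H(Y_i^n \mid \cdot)$ collapses cleanly to $H(X[S_i] \mid X[S_3])$ with no leftover cross terms; once that identification is in place, the remaining monotonicity is a short computation.
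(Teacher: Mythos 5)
Your proof is correct, and it takes a genuinely different route from the paper's. The paper constructs an auxiliary degraded output $Y_H[t] := G_H[t]\,Y_1[t]$ with $G_H \sim \mathcal{B}(p_2/p_1)$ independent of everything else, observes that $Y_H^n$ and $Y_2^n$ are statistically equivalent, and then runs a time-indexed chain-rule argument that conditions on $\{G_3[t]=0\}$ to extract the probability factors and swaps $Y_2^{t-1}$ for $Y_H^{t-1}$ inside the conditioning before dropping it against $Y_1^{t-1}$. You instead peel away the channel altogether: both sides of the inequality collapse to $\mathbb{E}\bigl[H(X[S_i]\mid X[S_3])\bigr]$, and the lemma reduces to the self-contained monotonicity statement that $q \mapsto f_{s_3}(q)/q$ is non-increasing, proved by the chain-rule factorization and a monotone coupling of Bernoulli subsets. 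The two arguments are morally close -- the paper's $G_H$ construction is a coupling in disguise -- but your reduction isolates the combinatorial core and produces a clean, reusable monotonicity lemma about sub-sampled conditional entropies, and it makes transparent where the disjointness hypothesis~\eqref{eq:NoTwoEqualToOne} is used (namely, to guarantee $S_i \cap S_3 = \emptyset$ and that $S_i\mid S_3$ is an i.i.d.\ Bernoulli$\bigl(p_i/(1-p_3)\bigr)$ subsample). What the paper's more operational route buys, as noted in its Remark, is that the same template adapts with little change to the delayed-CSIT setting, where $X[t]$ may depend on $G^{t-1}$; in that case the independence $X^n \perp G_T^n$ that your first identity leans on no longer holds, and your reduction would need to be reworked to carry the extra conditioning.
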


\begin{proof}
Let $G_H[t]$ be distributed as $\mathcal{B}(p_2/p_1)$, and be independent of all other parameters in the network. Let
\begin{align}
Y_H[t] = G_H[t] Y_1[t], \quad t=1,\ldots,n.
\end{align}

It is straightforward to see that $Y_H^t$ is statistically the same as $Y_2^t$ under the no CSIT assumption. For time instant $t$, where $1 \leq t \leq n$, we have
\begin{align}
& H\left( Y_2[t] | Y_2^{t-1}, G_3^n X^n, G_T^n \right) \nonumber \\
& \overset{(a)}= H\left( Y_2[t] | Y_2^{t-1}, G_3^n X^n, G_T^n, G_H^{t-1} \right) \nonumber \\
& \overset{(b)}= p_2 H\left( X[t] | Y_2^{t-1}, G_3^n X^n, G_T^n, G_H^{t-1}, G_2[t] = 1, G_3[t] = 0 \right) \nonumber \\
& \overset{(c)}= p_2 H\left( X[t] | Y_2^{t-1}, G_3^n X^n, G_T^n, G_H^{t-1}, G_3[t] = 0 \right) \nonumber \\
& \overset{(d)}= p_2 H\left( X[t] | Y_H^{t-1}, G_3^n X^n, G_T^n, G_H^{t-1}, G_3[t] = 0 \right) \nonumber\\
& \overset{(e)}\geq p_2 H\left( X[t] | Y_1^{t-1}, Y_H^{t-1}, G_3^n X^n, G_T^n, G_H^{t-1}, G_3[t] = 0 \right) \nonumber\\
& \overset{(f)}= p_2 H\left( X[t] | Y_1^{t-1}, G_3^n X^n, G_T^n, G_H^{t-1}, G_3[t] = 0 \right) \nonumber\\
& \overset{(g)}= \left( 1 - p_3 \right) \frac{p_2}{p_1} H\left( Y_1[t] | Y_1^{t-1}, G_3^n X^n, G_T^n, G_H^{t-1}, G_3[t] = 0 \right) \nonumber\\
& \overset{(h)}= \frac{p_2}{p_1} H\left( Y_1[t] | Y_1^{t-1}, G_3^n X^n, G_T^n, G_H^{t-1} \right) \nonumber \\
& \overset{(i)}= \frac{p_2}{p_1} H\left( Y_1[t] | Y_1^{t-1}, G_3^n X^n, G_T^n \right),
\end{align}
where $(a)$ holds since $G_H^{t-1}$ is independent of all other parameters in the network; $(b)$ follows from
\begin{align}
\Pr\left[ G_2[t] = 1 | G_3[t] = 0 \right] = \frac{p_2}{1-p_3};
\end{align}
$(c)$ holds since the transmit signal is independent of the channel realizations; $(d)$ follows from the fact that
\begin{align}
&\Pr \left[ X[t], Y_2^{t-1}, G_3^n X^n, G_T^n, G_H^{t-1}, G_3[t] = 0 \right] \nonumber \\
&~= \Pr \left[ X[t], Y_H^{t-1}, G_3^n X^n, G_T^n, G_H^{t-1}, G_3[t] = 0 \right]. 
\end{align}
This equality holds since starting from one side a simple index exchange of $2 \leftrightarrow H$ gives the other side and $X[t]$ is oblivious of the channel realizations;
$(e)$ holds since conditioning reduces entropy; $(f)$ is true since
\begin{align}
H\left( Y_H^{t-1} | Y_1^{t-1}, G_3^n X^n, G_T^n, G_H^{t-1} \right) = 0; 
\end{align}
$(g)$ follows from the fact that 
\begin{align}
\Pr\left[ G_1[t] = 1 | G_3[t] = 0 \right] = \frac{p_1}{1-p_3};
\end{align}
$(h)$ is true since $\Pr\left[ G_3[t] = 0 \right] = 1-p_3$; and $(i)$ holds since $G_H^{t-1}$ is independent of all other parameters in the network. Thus, summing all terms for $t = 1,\ldots,n$, we get 
\begin{align}
& \sum_{t=1}^{n}{H\left( Y_2[t] | Y_2^{t-1}, G_3^n X^n, G_T^n \right)} \nonumber \\
&~\geq \frac{p_2}{p_1} \sum_{t=1}^{n}{H\left( Y_1[t] | Y_1^{t-1}, G_3^n X^n, G_T^n \right)},
\end{align}
which implies
\begin{align}
H\left( Y_2^n | G_3^n X^n, G_T^n \right) \geq \frac{p_2}{p_1} H\left( Y_1^n | G_3^n X^n, G_T^n \right),
\end{align}
hence, completing the proof.
\end{proof}

\begin{remark}
In~\cite{vahid2013capacity}, we drived the Entropy Leakage Lemma for the case where the transmitter has the CSI with delay. We observe that, with delayed CSIT, the constant on the RHS of (\ref{eq:leakageeq}) would be smaller, meaning that the transmitter can further favor the stronger receiver.
\end{remark}



\subsubsection{Correlation Lemma} 


Consider again a binary fading interference channel similar to the channel described in Section~\ref{Sec:Problem}, but where channel gains have certain correlation. We denote the channel gain from transmitter ${\sf Tx}_i$ to receiver ${\sf Rx}_j$ at time instant $t$ by $\tilde{G}_{ij}[t]$, $i,j \in \{1,2\}$. We distinguish the RVs in this channel, using $\left( \tilde{.} \right)$ notation (\emph{e.g.}, $\tilde{X}_1[t]$). The input-output relation of this channel at time instant $t$ is given by
\begin{equation} 
\label{Eq:modified}
\tilde{Y}_i[t] = \tilde{G}_{ii}[t] \tilde{X}_i[t] \oplus \tilde{G}_{\bar{i}i}[t] \tilde{X}_{\bar{i}}[t], \quad i = 1, 2.
\end{equation}

We assume that the channel gains are distributed independent over time. However, they can be arbitrary correlated with each other subject to the following constraints.
\begin{align}
\label{eq:constraint}
\Pr & \left( \tilde{G}_{ii}[t] = 1 \right) = p_d, \quad \Pr \left( \tilde{G}_{\bar{i}i}[t] = 1 \right) = p_c, \nonumber \\ 
\Pr & \left( \tilde{G}_{ii}[t] = 1, \tilde{G}_{\bar{i}i}[t] = 1 \right) \nonumber \\
& = \Pr \left( \tilde{G}_{ii}[t] = 1 \right) \Pr \left( \tilde{G}_{\bar{i}i}[t] = 1 \right), \quad i=1,2.
\end{align}
In other words, the channel gains corresponding to incoming links at each receiver are still independent. Similar to the original channel, we assume that the transmitters in this BFIC have no knowledge of the CSI. We have the following result.

%


\begin{lemma}
\label{lemma:decodability}
[{\it Correlation Lemma}]  For any BFIC that satisfies the constraints in (\ref{eq:constraint}), we have
\begin{align}
\mathcal{C}\left( p_d, p_c \right) \equiv \tilde{\mathcal{C}}\left( p_d, p_c \right).
\end{align}
\end{lemma}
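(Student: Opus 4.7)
The plan is to argue that the capacity region is determined only by the \emph{per-receiver} joint distribution of the observed channel gains and signals, and that the constraint (\ref{eq:constraint}) is exactly what preserves this marginal. Take any rate pair $(R_1, R_2) \in \mathcal{C}(p_d, p_c)$, together with a sequence of encoding functions $\{f_{i,t}\}$ and decoders $\varphi_i$ achieving vanishing error. Since the encoders have no CSIT, the random codewords $X_i^n = (f_{i,t}(W_i))_{t=1}^n$ are deterministic functions of the messages alone, so they have exactly the same distribution when we deploy the same code on the original channel and on the correlated channel $\tilde{\ }$.

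The next step is to show that for each $i \in \{1,2\}$, the joint law of $(W_i, Y_i^n, G_{ii}^n, G_{\bar{i}i}^n)$ under the original channel equals the joint law of $(W_i, \tilde Y_i^n, \tilde G_{ii}^n, \tilde G_{\bar{i}i}^n)$ under the correlated channel. Conditioned on $(W_1, W_2)$, the outputs at ${\sf Rx}_i$ are determined by $X_1^n, X_2^n$ and the channel-gain pair $(G_{ii}^n, G_{\bar{i}i}^n)$ through (\ref{Eq:modified}), so it suffices to check that the distribution of $(G_{ii}^n, G_{\bar{i}i}^n)$ equals that of $(\tilde G_{ii}^n, \tilde G_{\bar{i}i}^n)$. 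Both are i.i.d. across time (this is part of the Correlation Lemma's hypothesis), so the whole vector distribution is determined by the one-shot joint $\Pr(\tilde G_{ii}[t], \tilde G_{\bar{i}i}[t])$. The constraint (\ref{eq:constraint}) forces the marginals to equal $\mathcal{B}(p_d)$ and $\mathcal{B}(p_c)$ and forces $\tilde G_{ii}[t]$ and $\tilde G_{\bar{i}i}[t]$ to be independent, matching the original channel's joint distribution. Hence the two laws coincide.

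Because receiver $i$'s decoding rule $\varphi_i$ is a function only of $(Y_i^n, G_{ii}^n, G_{\bar{i}i}^n)$, and because $W_i$ is independent of all channel gains in both models, the error event $\{\varphi_i(Y_i^n, G_{ii}^n, G_{\bar{i}i}^n) \neq W_i\}$ has the \emph{same} probability in the two channels. Consequently $\tilde\lambda_{i,n} = \lambda_{i,n} \to 0$, so the same rate pair is achievable in the correlated channel, i.e.\ $(R_1, R_2) \in \tilde{\mathcal{C}}(p_d, p_c)$. Taking the closure yields $\mathcal{C}(p_d, p_c) \subseteq \tilde{\mathcal{C}}(p_d, p_c)$.

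The only subtlety -- and the step one must be careful about -- is that the \emph{global} channel law differs between the two models (the cross-receiver correlations $\Pr(\tilde G_{ii}, \tilde G_{i\bar{i}})$ are arbitrary), so the two joint distributions of $(Y_1^n, Y_2^n, G^n)$ are generally \emph{not} equal. The argument works precisely because no CSIT means the encoders ignore $G^n$, and each decoder sees only its own incoming gains; under (\ref{eq:constraint}), what each decoder observes has the same law as in the original channel even though the two channels differ jointly. This is exactly the Sato-style observation tailored to the no-CSIT binary fading setting.
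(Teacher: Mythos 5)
Your proof is correct and follows essentially the same approach as the paper: use the same code on both channels (valid because no-CSIT encoders ignore the gains), observe that each decoder only sees its own incoming links, and invoke the constraint in (\ref{eq:constraint}) to conclude that the per-receiver joint law of the gains—hence the error probability—is unchanged. This is precisely the Sato-style marginal-distribution argument that the paper uses.
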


\begin{proof}
Suppose in the original BFIC messages $\hbox{W}_1$ and $\hbox{W}_2$ are encoded as $X_1^n$ and $X_2^n$ respectively, and each receiver can decode its corresponding message with arbitrary small decoding error probability as $n \rightarrow \infty$. Now, we show that if we use the same transmission scheme in the BFIC that satisfies the constraints in (\ref{eq:constraint}), \emph{i.e.}
\begin{align}
\tilde{X}_i[t] = X_i[t], \quad t = 1,2,\ldots,n, \quad i = 1, 2,
\end{align}
then the receivers in this BFIC can still decode $\hbox{W}_1$ and $\hbox{W}_2$.

In the original two-user binary fading IC as described in Section~\ref{Sec:Problem}, ${\sf Rx}_i$ uses the decoding function $\widehat{\hbox{W}}_i = \varphi_i(Y_i^n,G_{ii}^n,G_{\bar{i}i}^n)$. Therefore, the error event $\widehat{\hbox{W}}_i \neq \hbox{W}_i$, only depends on the choice of $\hbox{W}_i$ and marginal distribution of the channel gains $G_{ii}^n$ and $G_{\bar{i}i}^n$. 


Define 
\begin{align}
\mathcal{E}_{\hbox{W}_1} & = \left\{ \left( \hbox{W}_2, G_{11}^n, G_{21}^n \right) \text{~s.t.~} \widehat{\hbox{W}}_1 \neq \hbox{W}_1 \right\}, \nonumber \\
\tilde{\mathcal{E}}_{\hbox{W}_1} & = \left\{ \left( \hbox{W}_2, \tilde{G}_{11}^n, \tilde{G}_{21}^n \right) \text{~s.t.~} \widehat{\hbox{W}}_1 \neq \hbox{W}_1 \right\},
\end{align}
then, the probability of error is given by
\begin{align}
p_{\text{error}} & = \sum_{w_1}{\Pr \left( \hbox{W}_1 = w_1 \right) \Pr \left( \mathcal{E}_{w_1} \right)} \nonumber \\
& = \frac{1}{2^{nR_1}} \sum_{w_1}{\sum_{\left( w_2, g_{11}^n, g_{21}^n \right) \in \mathcal{E}_{w_1}}{\Pr \left( w_2, g_{11}^n, g_{21}^n \right)}} \\
& \overset{(a)}= \frac{1}{2^{nR_1}} \sum_{w_1}{\sum_{\left( w_2, \tilde{g}_{11}^n, \tilde{g}_{21}^n \right) \in \tilde{\mathcal{E}}_{w_1}}{\Pr \left( w_2, \tilde{g}_{11}^n, \tilde{g}_{21}^n \right)}} = \tilde{p}_{\text{error}}, \nonumber 
\end{align}
where $p_{\text{error}}$ and $\tilde{p}_{\text{error}}$ are the decoding error probability at ${\sf Rx}_1$ in the original and the BFIC satisfying the constraints in (\ref{eq:constraint}) respectively; and $(a)$ holds since according to (\ref{eq:constraint}), the joint distribution of $G_{12}^n$ and $G_{22}^n$ is the same as $\tilde{G}_{12}^n$ and $\tilde{G}_{22}^n$ and the fact that, as mentioned above, the error probability at receiver one only depends on the marginal distribution of these links. Similar argument holds for ${\sf Rx}_2$.
\end{proof}

\subsection{Deriving the Outer-bounds}
\label{Sec:OuterBounds}

The key to derive the outer-bound is the proper application of the two lemmas introduced in Section~\ref{Sec:KeyLemmas}. More precisely, we need to find a channel that satisfies the constraints in (\ref{eq:constraint}) such that the outer-bound on its capacity region, coincides with the achievable region of the original problem.  We provide the proof for three separate regimes.

\noindent $\bullet$ {\bf Regime~I:} $0 \leq p_c \leq p_d/\left( 1 + p_d \right)$: The derivation of the individual bounds, \emph{e.g.}, $R_1 \leq p_d$, is straightforward and omitted here. We focus on 
\begin{align}
R_1 + \beta R_2 \leq \beta p_d + p_c - p_d p_c,
\end{align}
where 
\begin{align}
\beta = \max \left\{ \frac{p_d-p_c}{p_dp_c}, 1 \right\}.
\end{align}
Due to symmetry, the derivation of the other bound is similar.

\begin{table}[ht]
\caption{The contracted channel for {\bf Regime~I} and {\bf Regime~II}.}
\centering
\begin{tabular}{| c | c || c | c |}
\hline
ID		 & channel realization   & ID		 & channel realization \\ [0.5ex]

\hline

\raisebox{18pt}{$A$}    &    \includegraphics[height = 1.6cm]{FiguresPDF/IC-c1.pdf}    &  \raisebox{18pt}{$B$}    &    \includegraphics[height = 1.6cm]{FiguresPDF/IC-c4.pdf} \\

\hline

                        &    $\Pr \left[ \text{state~A} \right] = p_dp_c$                                                  &                          &    $\Pr \left[ \text{state~B} \right] = p_d-p_c$ \\

\hline
\hline

\raisebox{18pt}{$C$}    &    \includegraphics[height = 1.6cm]{FiguresPDF/IC-c6.pdf}    &  \raisebox{18pt}{$D$}    &    \includegraphics[height = 1.6cm]{FiguresPDF/IC-c10.pdf} \\

\hline

                        &    $\Pr \left[ \text{state~C} \right] = q_dp_c$                                                  &                          &    $\Pr \left[ \text{state~D} \right] = q_dp_c$ \\

\hline

\end{tabular}
\label{Table:ModifiedExample}
\end{table}


The first step is to define the appropriate channel that satisfies the constraints in (\ref{eq:constraint}). The idea is to construct a channel such that $\tilde{G}_{ii}[t] = 1$ whenever $\tilde{G}_{i\bar{i}}[t] = 1$, $i=1,2$. We construct such channel with only five states rather than $16$ states and thus, we refer to it as ``contracted'' channel. The five states are denoted by states $A,B,C,D,$ and $E$ with corresponding probabilities $p_dp_c,\left( p_d - p_c \right),q_dp_c,q_dp_c$, and $q_dq_c$. These states are depicted in Table~\ref{Table:ModifiedExample} with the exception of state $E$ which corresponds to the case where all channel gains are $0$. Here, we have
\begin{align}
& \Pr \left( \tilde{G}_{11}[t] = 1 \right) = \sum_{j \in \{ A, B, C\} }{\Pr \left( \text{State~j} \right)} = p_d, \nonumber \\
& \Pr \left( \tilde{G}_{12}[t] = 1 \right) = \sum_{j \in \{ A, C\} }{\Pr \left( \text{State~j} \right)} = p_c, \nonumber \\
& \Pr \left( \tilde{G}_{11}[t] = 1, \tilde{G}_{21}[t] = 1 \right) = \Pr \left( \text{State~A} \right) = p_d p_c,
\end{align}
thus, this channel satisfies the conditions in (\ref{eq:constraint}). From Lemma~\ref{lemma:decodability}, we have $\mathcal{C}\left( p_d, p_c \right) \subseteq \tilde{\mathcal{C}}\left( p_d, p_c \right)$, thus, any outer-bound on the capacity region of the contracted channel, provides an outer-bound on the capacity region of the original channel.

We define 
\begin{align}
\tilde{X}_{1A}[t] \overset{\triangle}= \tilde{X}_1[t] \mathbf{1}_{\left\{ \mathrm{state~}A\mathrm{~occurs~at~time~}t \right\}},
\end{align}
where $\mathbf{1}_{\left\{ \mathrm{state~}A\mathrm{~occurs~at~time~}t \right\}}$ is equal to $1$ when at time $t$ state $A$ occurs. Similarly, we define $\tilde{X}_{1B}[t]$,$\tilde{X}_{1C}[t]$,$\tilde{X}_{1D}[t]$\\,$\tilde{X}_{1E}[t]$,$\tilde{X}_{2A}[t]$,$\tilde{X}_{2B}[t]$,$\tilde{X}_{2C}[t]$,$\tilde{X}_{2D}[t]$ and $\tilde{X}_{2E}[t]$. Therefore, we have
\begin{align}
\tilde{X}_{1}[t] = \tilde{X}_{1A}[t] \oplus \tilde{X}_{1B}[t] \oplus \tilde{X}_{1C}[t] \oplus \tilde{X}_{1D}[t] \oplus \tilde{X}_{1E}[t].
\end{align}

Suppose in the contracted channel, there exist encoders and decoders at transmitters and receivers respectively, such that each receiver can decode its corresponding message with arbitrary small decoding error probability as $\epsilon_n \rightarrow 0$. The derivation of the outer-bound is given in (\ref{eq:converse}) for $\epsilon_n \rightarrow 0$ as $n \rightarrow \infty$; and
\begin{align}
0 \leq p_c \leq p_d/\left( 1 + p_d \right) \Rightarrow \beta = \frac{p_d-p_c}{p_dp_c} > 1.
\end{align}

We have
{\small \begin{align}
\label{eq:converse}
& n \left( \tilde{R}_1 + \beta \tilde{R}_2 - \epsilon_n \right) \nonumber \\
& \overset{(a)}\leq I\left( \tilde{X}_1^n ; \tilde{Y}_1^n | \tilde{G}^n \right) + \beta I\left( \tilde{X}_2^n ; \tilde{Y}_2^n | \tilde{G}^n \right) \nonumber \\
& \overset{(b)}= H\left( \tilde{X}_{2D}^n | \tilde{G}^n \right) + H\left( \tilde{X}_{1B}^n, \tilde{X}_{1C}^n | \tilde{X}_{2D}^n, \tilde{G}^n \right) \nonumber \\
& + H\left( \tilde{X}_{1A}^n \oplus \tilde{X}_{2A}^n | \tilde{X}_{1B}^n, \tilde{X}_{1C}^n, \tilde{X}_{2D}^n, \tilde{G}^n \right) \nonumber\\
& - H\left( \tilde{X}_{2D}^n | \tilde{X}_1^n, \tilde{G}^n \right) - H\left( \tilde{X}_{2A}^n | \tilde{X}_{2D}^n, \tilde{X}_1^n, \tilde{G}^n \right) \nonumber \\
& + \beta H\left( \tilde{X}_{1C}^n | \tilde{G}^n \right) + \beta H\left( \tilde{X}_{2B}^n, \tilde{X}_{2D}^n | \tilde{X}_{1C}^n, \tilde{G}^n \right) \nonumber \\
& + \beta H\left( \tilde{X}_{1A}^n \oplus \tilde{X}_{2A}^n | \tilde{X}_{2B}^n, \tilde{X}_{2D}^n, \tilde{X}_{1C}^n, \tilde{G}^n \right) \nonumber \\
& - \beta H\left( \tilde{X}_{1C}^n | \tilde{X}_2^n, \tilde{G}^n \right) - \beta H\left( \tilde{X}_{1A}^n | \tilde{X}_{1C}^n, \tilde{X}_2^n, \tilde{G}^n \right) \nonumber \\
& \overset{(c)}\leq H\left( \tilde{X}_{2D}^n | \tilde{G}^n \right) + H\left( \tilde{X}_{1B}^n, \tilde{X}_{1C}^n | \tilde{G}^n \right) + H\left( \tilde{X}_{1A}^n \oplus \tilde{X}_{2A}^n | \tilde{G}^n \right) \nonumber \\
& - H\left( \tilde{X}_{2D}^n | \tilde{G}^n \right) - H\left( \tilde{X}_{2A}^n | \tilde{X}_{2D}^n, \tilde{G}^n \right) + \beta H\left( \tilde{X}_{1C}^n | \tilde{G}^n \right) \nonumber \\
& + \beta H\left( \tilde{X}_{2B}^n, \tilde{X}_{2D}^n | \tilde{G}^n \right) + \beta H\left( \tilde{X}_{1A}^n \oplus \tilde{X}_{2A}^n | \tilde{G}^n \right) \nonumber \\
& - \beta H\left( \tilde{X}_{1C}^n | \tilde{G}^n \right) - \beta H\left( \tilde{X}_{1A}^n | \tilde{X}_{1C}^n, \tilde{G}^n \right) \nonumber \\
& \overset{(d)}=  H\left( \tilde{X}_{1C}^n | \tilde{G}^n \right) + H\left( \tilde{X}_{1B}^n | \tilde{X}_{1C}^n, \tilde{G}^n \right) \nonumber \\
& + \left( 1 + \beta \right) H\left( \tilde{X}_{1A}^n \oplus \tilde{X}_{2A}^n | \tilde{G}^n \right)  - \beta H\left( \tilde{X}_{1A}^n | \tilde{X}_{1C}^n, \tilde{G}^n \right) \nonumber \\
& + \beta H\left( \tilde{X}_{2B}^n, \tilde{X}_{2D}^n | \tilde{G}^n \right) - H\left( \tilde{X}_{2A}^n | \tilde{X}_{2D}^n, \tilde{G}^n \right) \nonumber \\
& \overset{(e)}\leq H\left( \tilde{X}_{1C}^n | \tilde{G}^n \right) + \left( 1 + \beta \right) H\left( \tilde{X}_{1A}^n \oplus \tilde{X}_{2A}^n | \tilde{G}^n \right) \nonumber \\
& + \beta H\left( \tilde{X}_{2D}^n | \tilde{G}^n \right) + \beta H\left( \tilde{X}_{2B}^n | \tilde{X}_{2D}^n, \tilde{G}^n\right) - H\left( \tilde{X}_{2A}^n | \tilde{X}_{2D}^n, \tilde{G}^n \right) \nonumber \\
& \overset{(f)}\leq H\left( \tilde{X}_{1C}^n | \tilde{G}^n \right) + \left( 1 + \beta \right) H\left( \tilde{X}_{1A}^n \oplus \tilde{X}_{2A}^n | \tilde{G}^n \right) \nonumber \\
& + \beta H\left( \tilde{X}_{2D}^n | \tilde{G}^n \right) + \left( \beta - \frac{1}{\beta} \right) H\left( \tilde{X}_{2B}^n | \tilde{X}_{2D}^n, \tilde{G}^n\right) \nonumber \\
& \overset{(g)}\leq n q_dp_c + n \left( 1 + \beta \right) p_d p_c + n \beta q_dp_c + n \left( \beta - \frac{1}{\beta} \right) \left( p_d - p_c \right) \nonumber \\
& = n \left( \beta p_d + p_c - p_d p_c \right),
\end{align}}
where $(a)$ follows from Fano's inequality and data processing inequality; $(b)$ follows from the definition of the contracted channel and the chain rule; $(c)$ is true since from Claim~\ref{Claim:Zero} below, we have
\begin{align}
I\left( \tilde{X}_1^n ; \tilde{X}_2^n | \tilde{G}^n \right) = 0,
\end{align}
which results in
\begin{align}
H\left( \tilde{X}_{2A}^n | \tilde{X}_{2D}^n, \tilde{X}_1^n, \tilde{G}^n \right) & = H\left( \tilde{X}_{2A}^n | \tilde{X}_{2D}^n, \tilde{G}^n \right), \nonumber\\
H\left( \tilde{X}_{1A}^n | \tilde{X}_{1C}^n, \tilde{X}_2^n, \tilde{G}^n \right) & = H\left( \tilde{X}_{1A}^n | \tilde{X}_{1C}^n, \tilde{G}^n \right);
\end{align}
and the fact that conditioning reduces entropy; $(d)$ follows from the chain rule; $(e)$ holds since using Lemma~\ref{Lemma:ConditionalLeakage}, we have (use analogy: $\tilde{X}_{1B}^n \leftrightarrow Y_2^n$, $\tilde{X}_{1C}^n \leftrightarrow G_3^nX^n$, and $\tilde{X}_{1A}^n \leftrightarrow Y_1^n$)
\begin{align}
H\left( \tilde{X}_{1B}^n | \tilde{X}_{1C}^n, \tilde{G}^n \right) - \beta H\left( \tilde{X}_{1A}^n | \tilde{X}_{1C}^n, \tilde{G}^n \right) \leq 0;
\end{align}
and $(f)$ follows since from applying Lemma~\ref{Lemma:ConditionalLeakage}, we have (use analogy: $\tilde{X}_{2A}^n \leftrightarrow Y_1^n$, $\tilde{X}_{2D}^n \leftrightarrow G_3^nX^n$, and $\tilde{X}_{1B}^n \leftrightarrow Y_2^n$)
\begin{align}
H\left( \tilde{X}_{2A}^n | \tilde{X}_{2D}^n, \tilde{G}^n \right) \geq \frac{1}{\beta} H\left( \tilde{X}_{2B}^n| \tilde{X}_{2D}^n, \tilde{G}^n \right);
\end{align}
and $(g)$ follows from the fact that entropy of a binary random variable is maximized by i.i.d. Bernoulli distribution with success probability of half. Dividing both sides by $n$ and let $n \rightarrow \infty$, we get the desired result.

\begin{claim}
\label{Claim:Zero}
\begin{align}
I\left( \tilde{X}_1^n ; \tilde{X}_2^n | \tilde{G}^n \right) = 0.
\end{align}
\end{claim}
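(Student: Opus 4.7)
The plan is to exploit the no-CSIT assumption directly: since neither transmitter observes the channel gains, the encoding functions produce transmit signals that depend only on the corresponding message.

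First I would recall that under no CSIT, $\tilde{X}_i[t] = f_{i,t}(W_i)$, so the entire codeword $\tilde{X}_i^n$ is a deterministic function of $W_i$ alone, independent of $\tilde{G}^n$. The problem statement also specifies that the messages $W_1, W_2$ and the channel gains are mutually independent, and that $W_1 \perp W_2$. Therefore the triple $(W_1, W_2, \tilde{G}^n)$ has product distribution, and applying the deterministic encoders to $W_1$ and $W_2$ preserves independence: $(\tilde{X}_1^n, \tilde{X}_2^n, \tilde{G}^n)$ factorizes as $p(\tilde{x}_1^n)\, p(\tilde{x}_2^n)\, p(\tilde{g}^n)$.

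From this factorization it follows immediately that the conditional joint distribution $p(\tilde{x}_1^n, \tilde{x}_2^n \mid \tilde{g}^n) = p(\tilde{x}_1^n)\, p(\tilde{x}_2^n)$ for every $\tilde{g}^n$, which is a product distribution. Hence $\tilde{X}_1^n$ and $\tilde{X}_2^n$ are conditionally independent given $\tilde{G}^n$, and so $I(\tilde{X}_1^n ; \tilde{X}_2^n \mid \tilde{G}^n) = 0$.

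There is essentially no obstacle here; the claim is a direct consequence of the encoder structure under no CSIT together with the independence assumptions on messages and channel gains. The only subtlety worth flagging explicitly is that the argument would fail if the encoders had access to $\tilde{G}^{t-1}$ (delayed CSIT) or to $\tilde{G}^t$ (instantaneous CSIT), because then the channel gains could couple $\tilde{X}_1^n$ and $\tilde{X}_2^n$ through the common conditioning; this is precisely why the claim, and thus the outer-bound derivation, is specific to the no-CSIT setting.
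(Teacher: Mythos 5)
Your main argument is correct and slightly more direct than the paper's. You observe that under no CSIT each $\tilde{X}_i^n$ is a deterministic function of $W_i$ alone, so $(\tilde{X}_1^n,\tilde{X}_2^n,\tilde{G}^n)$ inherits the product factorization of $(W_1,W_2,\tilde{G}^n)$, giving conditional independence immediately. The paper instead sandwiches $I(\tilde{X}_1^n;\tilde{X}_2^n\mid\tilde{G}^n)$ by $I(W_1,\tilde{X}_1^n;W_2,\tilde{X}_2^n\mid\tilde{G}^n)$ and expands by the chain rule, treating the encoders as $\tilde{X}_i^n=f_i(W_i,\tilde{G}^n)$; that is, it allows the codewords to depend on the channel realization, which buys generality at the cost of a small amount of extra bookkeeping.

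Your closing remark, however, is wrong, and it is worth being precise about why. The claim does \emph{not} fail under delayed or instantaneous CSIT. Conditioned on $\tilde{G}^n$, the codeword $\tilde{X}_i^n$ is a deterministic function of $W_i$ with $\tilde{G}^n$ acting merely as a fixed parameter; and $W_1\perp W_2$ continues to hold given $\tilde{G}^n$ because $(W_1,W_2,\tilde{G}^n)$ is mutually independent by assumption. Deterministic post-processing of conditionally independent random variables preserves conditional independence, so $I(\tilde{X}_1^n;\tilde{X}_2^n\mid\tilde{G}^n)=0$ even when each encoder sees the full channel sequence. What would break the claim is cooperation or shared randomness between the transmitters beyond what is captured in $\tilde{G}^n$, not access to CSIT. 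Indeed the paper's proof is deliberately written for $\tilde{X}_i^n=f_i(W_i,\tilde{G}^n)$ precisely so the same claim can be reused under other CSIT assumptions; your proposed proof, by hard-coding the no-CSIT encoder form, forgoes that reusability, and the remark you append incorrectly suggests the claim is specific to no CSIT when it is not.
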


\begin{proof}
\begin{align}
\label{eq:addX2}
& 0 \leq I\left( \tilde{X}_1^n ; \tilde{X}_2^n | \tilde{G}^n \right) \leq I\left( \tilde{W}_1,\tilde{X}_1^n; \tilde{W}_2,\tilde{X}_2^n|\tilde{G}^n\right) \nonumber \\
& = I\left( \tilde{W}_1 ; \tilde{W}_2|\tilde{G}^n\right) + I\left( \tilde{W}_1; \tilde{X}_2^n| \tilde{W}_2, \tilde{G}^n \right) \nonumber \\
& ~+ I\left( \tilde{X}_1^n; \tilde{W}_2, \tilde{X}_2^n| \tilde{W}_1, \tilde{G}^n \right) = 0.
\end{align}
where the last equality holds since
\begin{align}
I\left( \tilde{W}_1; \tilde{W}_2| \tilde{G}^n \right) = 0,
\end{align}
due to the fact that the messages and the channel gains are mutually independent; 
\begin{align}
I\left( \tilde{W}_1; \tilde{X}_2^n| \tilde{W}_2, \tilde{G}^n \right) = 0,
\end{align}
due to the fact that $\tilde{X}_2^n=f_2(\tilde{W}_2,\tilde{G}^n)$; and 
\begin{align}
I\left( \tilde{X}_1^n; \tilde{W}_2, \tilde{X}_2^n| \tilde{W}_1, \tilde{G}^n \right),
\end{align}
due to the fact that $\tilde{X}_1^n=f_1(\tilde{W}_1,\tilde{G}^n)$.
\end{proof}

\noindent $\bullet$ {\bf Regime~II:} $p_d/\left( 1 + p_d \right) \leq p_c \leq p_d$: In this regime for $i=1,2$, we borrow the outer-bound
\begin{align}
R_i + \frac{p_dp_c}{p_d-p_c+p_dp_c} R_{\bar{i}} \leq p_d + \frac{p_dp_c}{p_d-p_c+p_dp_c} p_cq_d,
\end{align}
from~\cite{Guo}. Thus, we focus on
\begin{align}
R_i + R_{\bar{i}} \leq 2 p_c, \qquad i=1,2.
\end{align}

We use the same contracted channel as for the case of Regime~I. Let $G_S[t]$ be distributed as i.i.d. Bernoulli RV and
\begin{align}
G_S[t] \overset{d}\sim \mathcal{B}(\frac{p_d-p_c}{p_dp_c}), 
\end{align}
and for $i=1,2$, we define
\begin{align}
\label{eq:helper}
& \bar{X}_{iA}[t] \overset{\triangle}= G_S[t] \tilde{X}_{iA}[t], \nonumber \\
& \hat{X}_{iA}[t] \overset{\triangle}= \left( 1 - G_S[t] \right) \tilde{X}_{iA}[t], \nonumber \\
\end{align}

We have
{\small \begin{align}
\label{eq:converse2}
& n \left( \tilde{R}_1 + \tilde{R}_2 - \epsilon_n \right) \overset{(a)}\leq I\left( \tilde{X}_1^n ; \tilde{Y}_1^n | \tilde{G}^n \right) + I\left( \tilde{X}_2^n ; \tilde{Y}_2^n | \tilde{G}^n \right) \nonumber \\
& \overset{(b)}= H\left( \tilde{X}_{1A}^n \oplus \tilde{X}_{2A}^n, \tilde{X}_{1B}^n, \tilde{X}_{1C}^n, \tilde{X}_{2D}^n | \tilde{G}^n \right) \nonumber \\
& - H\left( \tilde{X}_{2A}^n, \tilde{X}_{2D}^n | \tilde{X}_1^n, \tilde{G}^n \right) \nonumber \\
& + H\left( \tilde{X}_{1A}^n \oplus \tilde{X}_{2A}^n, \tilde{X}_{2B}^n, \tilde{X}_{1C}^n, \tilde{X}_{2D}^n | \tilde{G}^n \right) \nonumber \\
& - H\left( \tilde{X}_{1A}^n, \tilde{X}_{1C}^n | \tilde{X}_2^n, \tilde{G}^n \right) \nonumber \\
& \overset{(c)}= H\left( \tilde{X}_{2D}^n | \tilde{G}^n \right) + H\left( \tilde{X}_{1B}^n, \tilde{X}_{1C}^n | \tilde{X}_{2D}^n, \tilde{G}^n \right) \nonumber \\
& + H\left( \tilde{X}_{1A}^n \oplus \tilde{X}_{2A}^n | \tilde{X}_{1B}^n, \tilde{X}_{1C}^n, \tilde{X}_{2D}^n, \tilde{G}^n \right) \nonumber\\
& - H\left( \tilde{X}_{2D}^n | \tilde{X}_1^n, \tilde{G}^n \right) - H\left( \tilde{X}_{2A}^n | \tilde{X}_1^n, \tilde{X}_{2D}^n, \tilde{G}^n \right) \nonumber \\
& + H\left( \tilde{X}_{1C}^n | \tilde{G}^n \right) + H\left( \tilde{X}_{2B}^n, \tilde{X}_{2D}^n | \tilde{X}_{1C}^n, \tilde{G}^n \right) \nonumber \\
& + H\left( \tilde{X}_{1A}^n \oplus \tilde{X}_{2A}^n | \tilde{X}_{2B}^n, \tilde{X}_{2D}^n, \tilde{X}_{1C}^n, \tilde{G}^n \right) \nonumber \\
& - H\left( \tilde{X}_{1C}^n | \tilde{X}_2^n, \tilde{G}^n \right) - H\left( \tilde{X}_{1A}^n | \tilde{X}_{1C}^n, \tilde{X}_2^n, \tilde{G}^n \right) \nonumber \\
& \overset{(d)}= H\left( \tilde{X}_{2D}^n | \tilde{G}^n \right) + H\left( \tilde{X}_{1B}^n, \tilde{X}_{1C}^n | \tilde{G}^n \right) \nonumber \\
& + H\left( \tilde{X}_{1A}^n \oplus \tilde{X}_{2A}^n  | \tilde{X}_{1B}^n, \tilde{X}_{1C}^n, \tilde{X}_{2D}^n, \tilde{G}^n \right) \nonumber \\
& - H\left( \tilde{X}_{2D}^n | \tilde{G}^n \right) - H\left( \tilde{X}_{2A}^n | \tilde{X}_{2D}^n, \tilde{G}^n \right) \nonumber \\
& + H\left( \tilde{X}_{1C}^n | \tilde{G}^n \right) + H\left( \tilde{X}_{2B}^n, \tilde{X}_{2D}^n | \tilde{G}^n \right) \nonumber \\
& + H\left( \tilde{X}_{1A}^n \oplus \tilde{X}_{2A}^n | \tilde{X}_{2B}^n, \tilde{X}_{2D}^n, \tilde{X}_{1C}^n, \tilde{G}^n \right) \nonumber \\
& - H\left( \tilde{X}_{1C}^n | \tilde{G}^n \right) - H\left( \tilde{X}_{1A}^n | \tilde{X}_{1C}^n, \tilde{G}^n \right) \nonumber 
\end{align}}
where $\epsilon_n \rightarrow 0$ as $n \rightarrow \infty$; $(a)$ follows from Fano's inequality and data processing inequality; $(b)$ holds due to the definition of the contracted channel; $(c)$ follows from the chain rule; $(d)$ is true since from Claim~\ref{Claim:Zero}, we have
\begin{align}
I\left( \tilde{X}_1^n ; \tilde{X}_2^n | \tilde{G}^n \right) = 0.
\end{align}
We use our definition in (\ref{eq:helper}) for the rest of the proof.
{\small \begin{align}
& \overset{(e)}=  H\left( \tilde{X}_{1C}^n | \tilde{G}^n \right) + H\left( \tilde{X}_{1B}^n | \tilde{X}_{1C}^n, \tilde{G}^n \right) \nonumber \\
& + H\left( \tilde{X}_{1A}^n \oplus \tilde{X}_{2A}^n | \tilde{X}_{1B}^n, \tilde{X}_{1C}^n, \tilde{X}_{2D}^n, \tilde{G}^n \right) \nonumber \\
& - H\left( \bar{X}_{1A}^n | \tilde{X}_{1C}^n, \tilde{G}^n \right) - H\left( \hat{X}_{1A}^n | \tilde{X}_{1B}^n, \tilde{X}_{1C}^n, \tilde{G}^n \right) \nonumber \\
& + H\left( \tilde{X}_{2D}^n | \tilde{G}^n \right) + H\left( \tilde{X}_{2B}^n | \tilde{X}_{2D}^n, \tilde{G}^n \right) \nonumber \\
& + H\left( \tilde{X}_{1A}^n \oplus \tilde{X}_{2A}^n | \tilde{X}_{2B}^n, \tilde{X}_{2D}^n, \tilde{X}_{1C}^n, \tilde{G}^n \right) \nonumber \\
& - H\left( \bar{X}_{2A}^n | \tilde{X}_{2D}^n, \tilde{G}^n \right) - H\left( \hat{X}_{2A}^n | \tilde{X}_{2B}^n, \tilde{X}_{2D}^n, \tilde{G}^n \right) \nonumber \\
& \overset{(f)}\leq  H\left( \tilde{X}_{1C}^n | \tilde{G}^n \right) + H\left( \tilde{X}_{1A}^n \oplus \tilde{X}_{2A}^n | \tilde{X}_{1B}^n, \tilde{X}_{1C}^n, \tilde{X}_{2D}^n, \tilde{G}^n \right) \nonumber \\
& - H\left( \hat{X}_{1A}^n | \tilde{X}_{1B}^n, \tilde{X}_{1C}^n, \tilde{G}^n \right) + H\left( \tilde{X}_{2D}^n | \tilde{G}^n \right) \nonumber \\
& + H\left( \tilde{X}_{1A}^n \oplus \tilde{X}_{2A}^n | \tilde{X}_{2B}^n, \tilde{X}_{2D}^n, \tilde{X}_{1C}^n, \tilde{G}^n \right) \nonumber \\ 
& - H\left( \hat{X}_{2A}^n | \tilde{X}_{2B}^n, \tilde{X}_{2D}^n, \tilde{G}^n \right) \nonumber \\
\end{align}}
{\small \begin{align}
& \leq  H\left( \tilde{X}_{1C}^n | \tilde{G}^n \right) + H\left( \bar{X}_{1A}^n \oplus \bar{X}_{2A}^n | \tilde{X}_{1B}^n, \tilde{X}_{1C}^n, \tilde{X}_{2D}^n, \tilde{G}^n \right) \nonumber \\
& + H\left( \hat{X}_{1A}^n, \hat{X}_{2A}^n | \bar{X}_{1A}^n \oplus \bar{X}_{2A}^n, \tilde{X}_{1B}^n, \tilde{X}_{1C}^n, \tilde{X}_{2D}^n, \tilde{G}^n \right) \nonumber \\
& - H\left( \hat{X}_{1A}^n | \tilde{X}_{1B}^n, \tilde{X}_{1C}^n, \tilde{G}^n \right) + H\left( \tilde{X}_{2D}^n | \tilde{G}^n \right) \nonumber \\
& + H\left( \tilde{X}_{1A}^n \oplus \tilde{X}_{2A}^n | \tilde{X}_{2B}^n, \tilde{X}_{2D}^n, \tilde{X}_{1C}^n, \tilde{G}^n \right) \nonumber \\
& - H\left( \hat{X}_{2A}^n | \tilde{X}_{2B}^n, \tilde{X}_{2D}^n, \tilde{G}^n \right) \nonumber \\
& \overset{(g)}\leq  H\left( \tilde{X}_{1C}^n | \tilde{G}^n \right) + H\left( \bar{X}_{1A}^n \oplus \bar{X}_{2A}^n | \tilde{X}_{1B}^n, \tilde{X}_{1C}^n, \tilde{X}_{2D}^n, \tilde{G}^n \right) \nonumber \\
& + H\left( \hat{X}_{2A}^n | \hat{X}_{1A}^n, \bar{X}_{1A}^n \oplus \bar{X}_{2A}^n, \tilde{X}_{1B}^n, \tilde{X}_{1C}^n, \tilde{X}_{2D}^n, \tilde{G}^n \right) \nonumber \\
& + H\left( \tilde{X}_{2D}^n | \tilde{G}^n \right) + H\left( \tilde{X}_{1A}^n \oplus \tilde{X}_{2A}^n | \tilde{X}_{2B}^n, \tilde{X}_{2D}^n, \tilde{X}_{1C}^n, \tilde{G}^n \right) \nonumber \\
& \overset{(h)}\leq n \left( 2 p_c + \epsilon_n \right),
\end{align}}
where $(e)$ follows from (\ref{eq:helper}) and the construction of the signals, we have
\begin{align}
\Pr \left[ \hat{X}_{1A}^n, \bar{X}_{1A}^n, \tilde{X}_{1C}^n, \tilde{G}^n \right] = \Pr \left[ \hat{X}_{1A}^n, \tilde{X}_{1B}^n, \tilde{X}_{1C}^n, \tilde{G}^n \right],
\end{align}
thus, 
\begin{align}
H\left( \hat{X}_{1A}^n | \bar{X}_{1A}^n, \tilde{X}_{1C}^n, \tilde{G}^n \right) = H\left( \hat{X}_{1A}^n | \tilde{X}_{1B}^n, \tilde{X}_{1C}^n, \tilde{G}^n \right),
\end{align}
and similar statement is true for $\bar{X}_{2A}^n$ and $\tilde{X}_{2B}^n$; $(f)$ holds since from Lemma~\ref{Lemma:ConditionalLeakage}, we have
\begin{align}
H\left( \tilde{X}_{1B}^n | \tilde{X}_{1C}^n, \tilde{G}^n \right) - H\left( \bar{X}_{1A}^n | \tilde{X}_{1C}^n, \tilde{G}^n \right) \leq 0;
\end{align}
$(g)$ holds since 
\begin{align}
&  H\left( \hat{X}_{1A}^n | \bar{X}_{1A}^n \oplus \bar{X}_{2A}^n, \tilde{X}_{1B}^n, \tilde{X}_{1C}^n, \tilde{X}_{2D}^n, \tilde{G}^n \right) \nonumber \\
& - H\left( \hat{X}_{1A}^n | \tilde{X}_{1B}^n, \tilde{X}_{1C}^n, \tilde{G}^n \right) \leq 0;
\end{align}
$(h)$ holds since
\begin{align}
& H\left( \tilde{X}_{1C}^n | \tilde{G}^n \right) \leq n q_dp_c, \nonumber \\
& H\left( \tilde{X}_{2D}^n | \tilde{G}^n \right) \leq n q_dp_c, \nonumber \\
& H\left( \tilde{X}_{1A}^n \oplus \tilde{X}_{2A}^n | \tilde{X}_{2B}^n, \tilde{X}_{2D}^n, \tilde{X}_{1C}^n, \tilde{G}^n \right) \leq n p_dp_c, \nonumber \\
& H\left( \bar{X}_{1A}^n \oplus \bar{X}_{2A}^n | \tilde{X}_{1B}^n, \tilde{X}_{1C}^n, \tilde{X}_{2D}^n, \tilde{G}^n \right) \leq n (p_d - p_c), \nonumber \\
& H\left( \hat{X}_{2A}^n | \hat{X}_{1A}^n, \bar{X}_{1A}^n \oplus \bar{X}_{2A}^n, \tilde{X}_{1B}^n, \tilde{X}_{1C}^n, \tilde{X}_{2D}^n, \tilde{G}^n \right) \nonumber \\
& ~\leq n (p_c + p_dp_c - p_d).
\end{align}

\noindent $\bullet$ {\bf Regime~III:} $p_d \leq p_c \leq 1$: In this regime, again we have $\beta = 1$, and the capacity region would be equal to the intersection of capacity regions of the two MACs formed at the receivers. Under no CSIT assumption, we do not need to create a contracted channel.

The derivation of the outer-bound is easier compared to the other regimes. Basically, ${\sf Rx}_i$ after decoding and removing its corresponding signal, has a stronger channel from ${\sf Tx}_{\bar{i}}$ compared to ${\sf Rx}_{\bar{i}}$, and thus it must be able to decode both $\tilde{\hbox{W}}_i$ and $\tilde{\hbox{W}}_{\bar{i}}$, $i=1,2$. Thus, we have
\begin{align}
R_1 + R_2 \leq p_d + p_c - p_dp_c.
\end{align}

\section{Achievability}
\label{Sec:Achievability}

In this section, we provide the proof of Theorem~\ref{THM:MainInner}. We show that for the weak and the strong interference regimes, the entire capacity region is achieved by applying point-to-point erasure codes with appropriate rates at each transmitter, using either treat-interference-as-erasure or interference-decoding at each receiver, based on the channel parameters. For the moderate interference regime, \emph{i.e.} $\frac{p_d}{1+p_d} \leq p_c \leq p_d$, the inner-bound is based on a modification of the Han-Kobayashi scheme for the erasure channel, enhanced by time-sharing.

\subsection{Weak and Strong Interference Regimes}

Each transmitter applies a point-to-point erasure random code as described in~\cite{Elias}. On the other hand, at each receiver we have two options: $(1)$ interference-decoding; and $(2)$ treat-interference-as-erasure. When a receiver decodes the interference alongside its intended message, the achievable rate region is the capacity region of the multiple-access channel (MAC) formed at that receiver as depicted in Fig.~\ref{Fig:IC-Ach}. The MAC capacity at ${\sf Rx}_1$, is given by
\begin{equation}
\label{eq:inteDecConst}
\left\{ \begin{array}{ll}
\vspace{1mm}  R_1 \leq p_d, & \\
\vspace{1mm}  R_2 \leq p_c, & \\
R_1 + R_2 \leq 1 - q_dq_c. &
\end{array} \right.
\end{equation}

\begin{figure}[ht]
\centering
\includegraphics[height = 3.5cm]{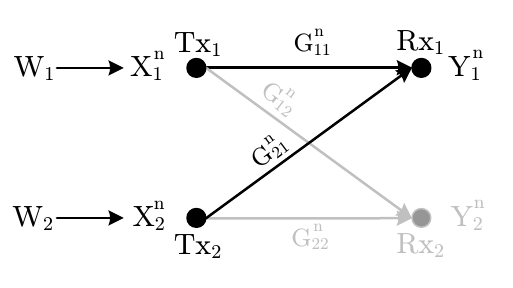}
\caption{\it The multiple-access channel (MAC) formed at ${\sf Rx}_1$.\label{Fig:IC-Ach}}
\end{figure}

As a result, ${\sf Rx}_1$ can decode its message by interference decoding, if $R_1$ and $R_2$ satisfy the constraints in (\ref{eq:inteDecConst}). On the other hand, if ${\sf Rx}_1$ treats interference as erasure, it basically ignores the received signal at time instants where $G_{21}[t] = 1$, and $p_dq_c$ fraction of the time, it receives the transmit signal of ${\sf Tx}_1$. Thus, ${\sf Rx}_1$ can decode its message by treat-interference-as-erasure, if $R_1 \leq p_dq_c$.

Similarly,  ${\sf Rx}_2$ can decode its message by treat-interference-as-erasure, if  $R_2 \leq p_dq_c$. Also, ${\sf Rx}_2$ can decode its message by interference decoding, if $R_1$ and $R_2$ are inside the capacity region of the multiple-access channel (MAC) formed at ${\sf Rx}_2$, i.e.,
\begin{equation}
\label{eq:inteDecConst2}
\left\{ \begin{array}{ll}
\vspace{1mm}  R_1 \leq p_c, & \\
\vspace{1mm}  R_2 \leq p_d, & \\
R_1 + R_2 \leq 1 - q_dq_c. &
\end{array} \right.
\end{equation}


Therefore, the achievable rate region by either treat-interference-as-erasure or interference-decoding at each receiver is the convex hull of (\ref{eq:R}) on top of the next page.
\begin{figure*}[t]
\centering
\begin{equation}
\label{eq:R}
\mathcal{R}= \left \{ (R_1,R_2) \left|   \underbrace{\left( \left\{ \begin{array}{ll}
R_1 \leq p_d & \\
R_2 \leq p_c & \\
R_1 + R_2 \leq 1 - q_dq_c &
\end{array}  \right. \hspace{-2.5mm} \text{or~} R_1 \leq p_dq_c  \right)}_\text{decodability constraint at ${\sf Rx}_1$} \text{ and }  \underbrace{\left(  \left\{ \begin{array}{ll}
R_1 \leq p_c & \\
R_2 \leq p_2 & \\
R_1 + R_2 \leq 1 - q_dq_c &
\end{array} \right. \hspace{-2.5mm} \text{or~} R_2 \leq p_dq_c  \right)}_\text{decodability constraint at ${\sf Rx}_2$}  \right. \right \}.
\end{equation}
\hrule
\end{figure*}

In the remaining of this section, we show that the convex hull of $\mathcal{R}$ matches the outer-bound of Theorem ~\ref{THM:MainOuter} for the weak and the strong interference regimes, \emph{i.e.}
\begin{itemize}
\item For $0 \leq p_c \leq \frac{p_d}{1+p_d}$: 
\begin{equation}
\label{Eq:OuterRegime11}
\bar{\mathcal{C}}\left( p_d, p_c \right) =
\left\{ \begin{array}{ll}
\hspace{-1.5mm} \left( R_1, R_2 \right) \left| \parbox[c][4em][c]{0.22\textwidth} {$0 \leq R_i \leq p_d \qquad i=1,2$ \\
$R_i + \beta R_{\bar{i}} \leq \beta p_d + p_c - p_d p_c$} \right. \end{array} \right\}
\end{equation}
where
\begin{align}
\beta = \frac{p_d-p_c}{p_dp_c}.
\end{align}
\item For $p_d \leq p_c \leq 1$:
\begin{equation}
\label{Eq:OuterRegime33}
\bar{\mathcal{C}}\left( p_d, p_c \right) =
\left\{ \begin{array}{ll}
\hspace{-1.5mm} \left( R_1, R_2 \right) \left| \parbox[c][4em][c]{0.22\textwidth} {$0 \leq R_i \leq p_d \qquad i=1,2$ \\
$R_i + R_{\bar{i}} \leq p_d + p_c - p_d p_c$} \right. \end{array} \right\}
\end{equation}
\end{itemize}


First, it is easy to verify that the region described in (\ref{Eq:OuterRegime11}), is the convex hull of the following corner points:
\begin{align}
\label{eq:cornerP}
& \left( R_1, R_2 \right) = \left( 0, 0 \right), \nonumber \\
& \left( R_1, R_2 \right) = \left( p_d, 0 \right), \nonumber \\
& \left( R_1, R_2 \right) = \left( 0, p_d \right), \nonumber \\
& \left( R_1, R_2 \right) = \left( p_d, q_dp_c \right), \nonumber \\
& \left( R_1, R_2 \right) = \left( q_dp_c, p_d \right), \nonumber \\
& \left( R_1, R_2 \right) = \left( p_dq_c, p_dq_c \right).
\end{align}


Now, when $0 \leq p_c \leq p_d/\left( 1 + p_d \right)$, the rate region $\mathcal{R}$ (as defined in (\ref{eq:R})) and its convex hull are shown in Fig.~\ref{Fig:IC-AchEx}. As we can note from Fig.~\ref{Fig:IC-AchEx}(b), in this case the convex hull of $\mathcal{R}$ is indeed the convex hull of the corner points in (\ref{eq:cornerP}).

\begin{figure}[ht]
\centering
\subfigure[]{\includegraphics[height = 4.5 cm]{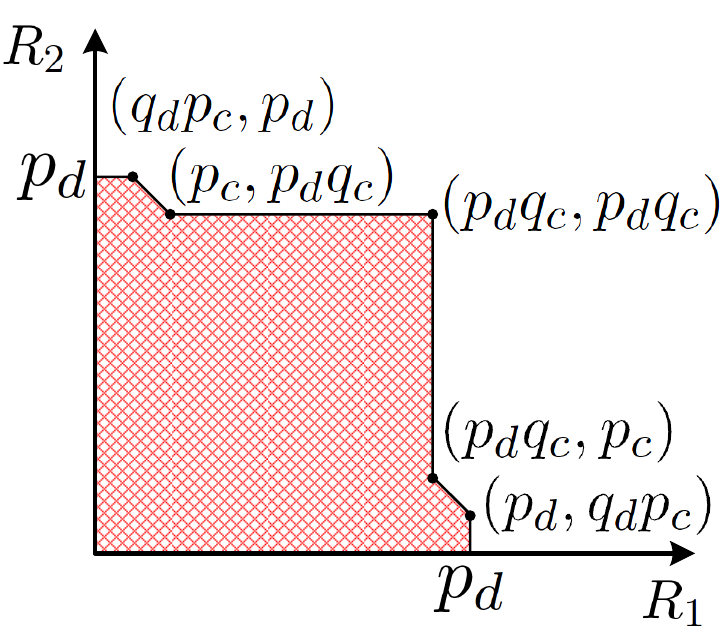}}
\hspace{0.75 in}
\subfigure[]{\includegraphics[height = 4.5 cm]{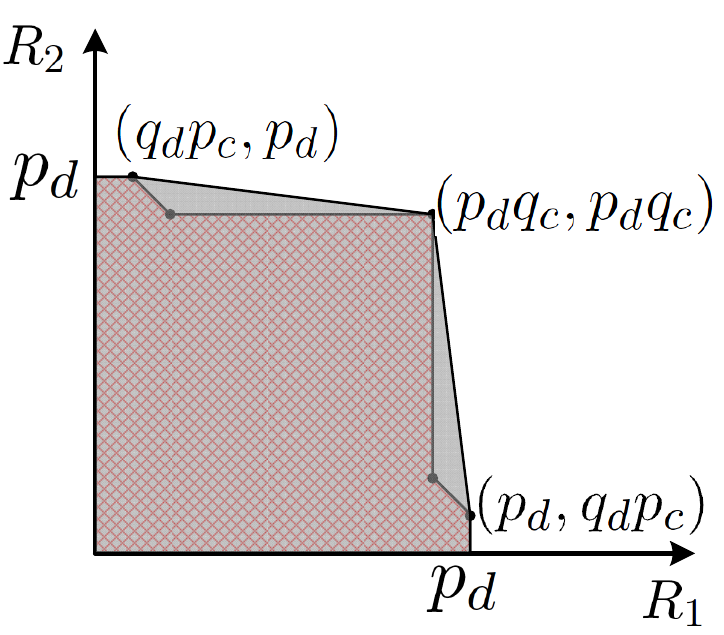}}
\caption{$(a)$ Depiction of the rate region $\mathcal{R}$ for $0 \leq p_c \leq p_d/\left( 1 + p_d \right)$; and $(b)$ its convex hull.\label{Fig:IC-AchEx}}
\end{figure}

On the other hand, when $p_d \leq p_c \leq 1$, the rate region $\mathcal{R}$ (as defined in (\ref{eq:R})) is depicted in Fig.~\ref{Fig:IC-AchEx2}, which is the convex hull of the first five corner points in (\ref{eq:cornerP}). In this case, the last point in (\ref{eq:cornerP}) is strictly inside the region in Fig.~\ref{Fig:IC-AchEx2}, hence again, $\mathcal{R}$ coincides with the convex hull of the corner points in (\ref{eq:cornerP}), and this completes the proof of Theorem~\ref{THM:MainInner} under no CSIT assumption for the weak and strong interference regimes.

\begin{figure}[ht]
\centering
\includegraphics[height = 4.5 cm]{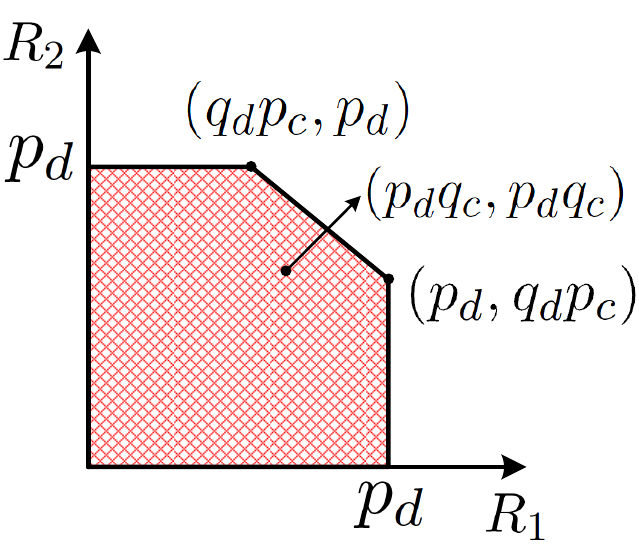}
\caption{Depiction of the rate region $\mathcal{R}$ for $p_d \leq p_c \leq 1$. In this case, corner point $\left( R_1, R_2 \right) = \left( p_dq_c, p_dq_c \right)$ is strictly inside the convex hull of the first five points in (\ref{eq:cornerP}).\label{Fig:IC-AchEx2}}
\end{figure}

\subsection{Moderate Interference Regime}

In the remaining of this section, a Han-Kobayashi~(HK) scheme is proposed and the corresponding sum-rate is investigated. The main focus would be the moderate interference regime, \emph{i.e.} $\frac{p_d}{1+p_d} \leq p_c \leq p_d$. In this regime, neither treating interference as erasure nor completely decoding the interference would be optimal. The result echoes that the sum-capacity might be a similar ``W'' curve as that of the non-fading Gaussian interference channel~\cite{Etkin}. 

To explore the power of HK scheme under this channel model, the codebooks of common and private messages are generated based on rate-splitting, which is originally developed for MAC channels~\cite{grant2001rate}. In a MAC channel, each user can split its own message into two
sub-messages\footnote{In the view-point of a MAC, each user is split into so-called virtual users, which is equivalent to splitting the corresponding message instead.}, and the receiver decodes all messages and sub-messages through an onion-peeling process. That is the receiver
decodes one message by treating others as erasure, and then removes the decoded contribution from the received signal. The receiver
decodes the next message from the remaining signal until no message is left. The key advantage of
rate-splitting scheme for the MAC channel is that the entire capacity region can be achieved with single-user codebooks and no
time-sharing is required. 

We now present the modified HK scheme for the two-user BFIC with no CSIT. Let $\Ber{p}$ denote Bernoulli distribution with probability $p$ taking value 1.  Given $\delta \in [0,1]$, any random
variable $\rX$ with distribution $\Ber{\frac{1}{2}}$ can be split as $\rX = \max(\rX_{c}, \rX_{p})$, where
$\rX_{c} \sim \Ber{\frac{\delta}{2}}$ and $\rX_p \sim \Ber{\frac{1-\delta}{2-\delta}}$. From the view point of the random coding, instead of using $\rX$
to generate a single codebook, one can generate two codebooks according to the distributions of $\rX_c$ and $\rX_p$ respectively, and then the channel input is generated by the $\max$
operator. It is easy to see that $\CENT{\rX}{\rX_c} = \frac{2-\delta}{2}\ENT{\frac{1}{2-\delta}} =:
C_\delta$. Therefore, we have
\begin{align}
\label{eq:basic}
\MI{\rX_{c}; \rX} & =  \ENT{\rX} - \CENT{\rX}{\rX_{c}} = 1 - C_\delta, \NONUM \\
\CMI{\rX_{p}; \rX}{\rX_c} & = \CENT{\rX}{\rX_{c}} =  C_\delta. 
\end{align}
Intuitively, $C_\delta$ can be viewed as the portion of the message carried by codebook generated via
$\rX_p$. Note that $C_\delta$ is continuous and monotonically decreasing with respect to $\delta$. Thus as $\delta$ goes
from $0$ to $1$, the coding scheme continuously changes from $\rX_p$ only scheme ($\delta = 0$) to $\rX_c$
only scheme ($\delta=1$). 

Although time-sharing does not play a key role in achieving the capacity region of the MAC, it does enlarge achievable rate
region of the HK scheme~\cite{te1981new}. For this particular channel, we
consider a Han-Kobayashi scheme with time-sharing as follows. For any $\delta_1 \in [0,1]$, we generate one codebook
according to distribution $\Ber{\frac{\delta_1}{2}}$ for the common message and one random codebook according to distribution $\Ber{ \frac{1-\delta_1}{2-\delta_1}}$ for the private message. We denote these two codebooks by
$\mathcal{C}_c(\delta_1)$ and $\mathcal{C}_p(\delta_1)$ respectively. Similarly for any $\delta_2 \in [0,1]$, we can
generate codebooks $\mathcal{C}_c(\delta_2)$ and $\mathcal{C}_p(\delta_2)$. In
addition, let $\{\rQ[t]\}$ be an i.i.d random sequence with $\Prob{\rQ[t]=1} =
\Prob{\rQ[t]=2} = 1/2$, which generates a particular time-sharing sequence. Before any communication begins,  the
time-sharing sequence is revealed to the transmitters and the receivers. Then the two transmitters communicate their messages as follows. If $\rQ[t] = 1$, user~1
encodes its common and private messages according to codebooks $\mathcal{C}_c(\delta_1)$ and $\mathcal{C}_p(\delta_1)$
respectively, and uses the $\max$ operator to generate
the transmit signal. Meanwhile user 2 does the same thing except that it uses codebooks $\mathcal{C}_c(\delta_2)$ and $\mathcal{C}_p(\delta_2)$. If
$\rQ[t]=2$, the two users switch their codebooks. 

Equivalently, we can state the coding scheme in another way: Given i.i.d. time-sharing random sequence $\{\rQ[t]\}$, the two users encode
their common and private messages independently according to i.i.d. sequences $\{\rX_{1c}[t], \rX_{1p}[t]\}$ and
$\{\rX_{2c}[t], \rX_{2p}[t]\}$. Given $\rQ[t]$, the distributions of the other sequences are defined in
Table~\ref{tbl:1}. 
\begin{table}[h] 
  \centering
  \begin{tabular}{|c|c|c|c|c|}
    \hline  $\rQ[t]$ & $\rX_{1c}[t]$    & $\rX_{1p}[t]$      & $\rX_{2c}[t]$   & $\rX_{2p}[t]$ \\
    \hline   1  & $\Ber{\frac{\delta_1}{2}}$ & $\Ber{1 - \frac{1}{2-\delta_1}}$ & $\Ber{\frac{\delta_2}{2}}$ & $\Ber{\frac{1-\delta_2}{2-\delta_2}}$ \\
    \hline  2 & $\Ber{\frac{\delta_2}{2}}$ & $\Ber{1 - \frac{1}{2-\delta_2}}$ & $\Ber{\frac{\delta_1}{2}}$ & $\Ber{\frac{1-\delta_1}{2-\delta_1}}$  \\ 
    \hline
  \end{tabular}
	\vspace{1mm}
  \caption{Summary of the HK scheme with time-sharing}
  \label{tbl:1}
\end{table}
For fixed $(\delta_1, \delta_2)$, we can compute the corresponding achievable sum-rate, \emph{i.e.}
\begin{align}
R_{sum}(\delta_1, \delta_2) := R_1(\delta_1, \delta_2) + R_2(\delta_1, \delta_2), 
\end{align}
and then maximize over all possible values of
$(\delta_1, \delta_2)$. To determine $R_{sum}(\delta_1, \delta_2)$, it is sufficient to make sure that both common
messages are decodable at both receivers and each private message is decodable at its corresponding receiver. More
specifically, we follow the similar procedure as~\cite[Section~III]{Etkin} except that we explore time-sharing and
optimize the rate splitting.  That is the rates of the common messages are determined by the two virtual compound-MAC
channels at the two receivers, and each private message is decoded only at its corresponding receiver. We have following
theorem:
\begin{theorem}\label{thm:1}
  The following sum-rate is optimal over $\delta_1$, $\delta_2\in [0, 1]$:
  \begin{align} \label{eq:thm1A}
    R_{sum}  = \left\{
    \begin{array}{ll}
      2p_d (1 - p_c) & \text{if }  p_c \leq  \frac{p_d}{1+p_d}\\
      p_d + p_c - p_d p_c + \frac{p_d - p_c}{2} C_\delta^* & \text{if } \frac{p_d}{1+p_d}  < p_c \leq p_d
    \end{array} \right.
  \end{align}
where
\begin{align} \label{eq:thm1B}
  C_\delta^* := \frac{p_dp_c - (p_d - p_c)}{p_dp_c - \frac{p_d - p_c}{2}}.
\end{align}
\end{theorem}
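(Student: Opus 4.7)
The plan is to (i) express the achievable sum-rate $R_{\mathrm{sum}}(\delta_1,\delta_2)$ of the HK-with-time-sharing scheme of Table~\ref{tbl:1} from its decoding constraints and (ii) maximize over $(\delta_1,\delta_2)\in[0,1]^2$.

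First I would record the compound-MAC rate region at each receiver. At $\mathrm{Rx}_i$ the decoder jointly decodes $(W_{ic},W_{\bar{i}c},W_{ip})$ while treating the unintended private stream $X_{\bar{i}p}$ as part of the noise in the OR-inside-XOR binary channel. Combining the $\mathrm{Rx}_1$ sum bound $R_1+R_{2c}\le \MI{X_1,X_{2c};Y_1\mid Q,G}$ with the $\mathrm{Rx}_2$ private bound $R_{2p}\le \CMI{X_{2p};Y_2}{X_{1c},X_{2c},Q,G}$ yields
\begin{equation*}
R_{\mathrm{sum}}(\delta_1,\delta_2)\le \MI{X_1,X_{2c};Y_1\mid Q,G}+\CMI{X_{2p};Y_2}{X_{1c},X_{2c},Q,G},
\end{equation*}
which by the $\delta_1\leftrightarrow\delta_2$ symmetry of Table~\ref{tbl:1} coincides with its role-swapped counterpart.

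Second, I would evaluate each mutual information by conditioning on the four channel states $G\in\{(0,0),(1,0),(0,1),(1,1)\}$, using the identities $\ENT{X}=1$ and $\CENT{X}{X_c}=C_\delta$ together with an explicit expansion of $\CENT{X_1\oplus X_2}{X_{1c},X_{2c}}$ through $X_i=X_{ic}\vee X_{ip}$. After weighting by the state probabilities $p_dp_c$, $p_dq_c$, $q_dp_c$, $q_dq_c$ and averaging over $Q\in\{1,2\}$, $R_{\mathrm{sum}}(\delta_1,\delta_2)$ simplifies to a closed-form expression depending on $(\delta_1,\delta_2)$ only through $\{C_{\delta_1},C_{\delta_2}\}$. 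Exploiting the fact that $\delta\mapsto C_\delta$ is a continuous monotone bijection of $[0,1]$ onto $[0,1]$, and that the objective is symmetric in its two arguments, a convexity/corner argument reduces the 2-dimensional maximization to a univariate problem along $\delta_1=\delta_2$. The resulting univariate problem splits into two branches: for $p_c\le p_d/(1+p_d)$ the unconstrained monotone objective is maximized at the boundary $\delta^*=0$, giving the pure-private (treat-interference-as-erasure) sum-rate $2p_d(1-p_c)$; for $p_d/(1+p_d)<p_c\le p_d$ the compound-MAC sum constraint $R_{1c}+R_{2c}\le \MI{X_{1c},X_{2c};Y_1\mid Q,G}$ becomes active and forces the interior first-order condition $C_{\delta^*}=C_\delta^*$ of~\eqref{eq:thm1B}, which upon substitution gives the moderate-regime sum-rate in the second branch of~\eqref{eq:thm1A}.

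The main obstacle will be the entropy bookkeeping in step two: the OR-inside-XOR structure of $Y_i$ prevents any direct reuse of the Gaussian HK algebra of~\cite{Etkin}, so each of the four channel states must be treated by hand with a further case analysis over the four joint values of $(X_{1c},X_{2c})$, keeping careful track of which sub-case contributes $\ENT{X_{ip}}$ and which contributes $0$. A secondary difficulty is certifying that the sum-rate bound displayed above is the binding one among the seven HK compound-MAC inequalities at each receiver; I would resolve this by checking that the resulting inner bound matches the outer bound of Theorem~\ref{THM:MainOuter} at the two endpoints $p_c=p_d/(1+p_d)$ and $p_c=p_d$, together with a monotonicity argument across the intermediate range.
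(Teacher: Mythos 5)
Your plan has gaps that prevent it from recovering Theorem~\ref{thm:1}. The single inequality $R_{\mathrm{sum}}\le I(X_1,X_{2c};Y_1\mid Q,G)+I(X_{2p};Y_2\mid X_{1c},X_{2c},Q,G)$ is only one of the active sum-rate constraints of the scheme. Writing $M_{c1}=I(X_{1c},X_{2c};Y_1\mid Q,G)$, $M_{c2}=I(X_{2c};Y_1\mid X_{1c},Q,G)$ and $M_p=I(X_{1p};Y_1\mid X_{1c},X_{2c},Q,G)$, your bound is $M_{c1}+2M_p$, but at the symmetric common-rate point one also needs $R_{2c}\le M_{c2}$; when $M_{c1}>2M_{c2}$ that constraint binds and the achievable sum-rate drops to $2M_{c2}+2M_p$. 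The true achievable sum-rate for fixed $(\delta_1,\delta_2)$ is $2M_p+\min(M_{c1},2M_{c2})$, and $C_\delta^*$ in \eqref{eq:thm1B} is precisely the crossing point $M_{c1}=2M_{c2}$ of these two pieces, so with your single bound $C_\delta^*$ is undetermined. Furthermore, your proposed certification strategy cannot work: the paper explicitly states that the inner and outer bounds do \emph{not} coincide in the moderate interference regime, so there is nothing in Theorem~\ref{THM:MainOuter} to match against. Theorem~\ref{thm:1} asserts optimality of $(\delta_1,\delta_2)$ within the HK-with-time-sharing class of Table~\ref{tbl:1}, not converse matching.

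The reduction of the two-variable maximization to the diagonal $\delta_1=\delta_2$ is also not valid here. The conditional entropy $\gamma(\delta_1,\delta_2):=H(X_1\oplus X_2\mid X_{1c},X_{2c},Q)$ enters the branch $2M_p+M_{c1}$ with a positive coefficient $p_dp_c$ and satisfies $\gamma\le C_{\delta_1}+C_{\delta_2}$ with equality if and only if $\delta_1=1$ or $\delta_2=1$ (one user's private stream deterministic). On the diagonal with $\delta\in(0,1)$ the inequality is strict, so the achievable sum-rate is strictly below the second line of \eqref{eq:thm1A}; the symmetry of the objective does not rescue this because the objective is not concave on a level set of $C_{\delta_1}+C_{\delta_2}$ -- its maximum sits at a corner. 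The paper instead (i) uses $\gamma\le C_{\delta_1}+C_{\delta_2}$ to upper-bound $2M_p+\min(M_{c1},2M_{c2})$ by a function of the single scalar $C_{\delta_1}+C_{\delta_2}\in[0,2]$, maximizes it to locate $C_\delta^*$, and (ii) achieves this value by the corner choice $\delta_2=1$ with $C_{\delta_1}=C_\delta^*$, under which $\gamma=C_{\delta_1}$ exactly. Both the converse-within-the-scheme step and the achievability step hinge on the boundary behaviour of $\gamma$, which your diagonal restriction forfeits.
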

\begin{remark}
  Treating interference as erasure can achieve a sum-rate as large as $2p_d (1 - p_c)$, while the sum-rate of
  interference-decoding scheme is $p_d + p_c - p_d p_c$. In the weak interference regime where
  $p_c \leq \frac{p_d}{1+p_d}$, the proposed HK scheme degrades to treating-interference-as-erasure and achieves
  the sum-capacity. In the moderate interference regime where $\frac{p_d}{1+p_d} < p_c \leq p_d$, partially decoding
  interference can outperform the other two schemes.
\end{remark}

\noindent {\bf Proof of Theorem~\ref{thm:1}:}

If $p_c \leq  \frac{p_d}{1+p_d}$, the sum-rate in Theorem~\ref{thm:1} is the sum-capacity and can be achieved
by treating interference as erasure, which corresponds to the HK scheme with $\delta_1 = \delta_2 = 0$. Thus, we
assume that $\frac{p_d}{1+p_d}  < p_c \leq p_d$. 

Let $\rG[t] = (\rG_{11}[t], \rG_{21}[t])$ and let $R_{ic}(\delta_1, \delta_2)$ and $R_{ip}(\delta_1, \delta_2)$
denote achievable rates of the common and the private messages respectively. Since all codebooks are generated according to
i.i.d. sequences and the channel is memoryless, we will drop all time indices in what follows.

Since at each receiver we decode private message last, we have
\begin{align}
  R_{1p}(\delta_1, \delta_2) = \CMI{\rX_{1p} ; \rY_1}{\rX_{1c}, \rX_{2c}, \rQ, \rG}.  \NONUM
\end{align}

For the common messages, the
achievable rates fall into the intersection of two virtual MACs at the two receivers. Taking symmetric properties of the
channel and the coding scheme into
account, we conclude that any positive rate pair for the common messages satisfying the following conditions is achievable:
\begin{align}
  R_{1c}(\delta_1, \delta_2) + R_{2c}(\delta_1, \delta_2)  & \leq \CMI{\rX_{2c}, \rX_{1c} ; \rY_1}{\rQ, \rG}, \\
  R_{2c}(\delta_1, \delta_2)  & \leq \CMI{\rX_{2c} ; \rY_1 }{\rX_{1c}, \rQ, \rG} \label{eq:mac_single}, \\
  R_{1c}(\delta_1, \delta_2)  & \leq \CMI{\rX_{1c} ; \rY_1 }{\rX_{2c}, \rQ, \rG}. \label{eq:mac_single2}
\end{align}
Since $p_d \geq p_c$, it is easy to see that \eqref{eq:mac_single} implies
\eqref{eq:mac_single2} as far as sum-rate is concerned. 
Therefore, we can achieve the following sum-rate: 
\begin{align}
R_{sum}(\delta_1, \delta_2) & \leq 2M_p(\delta_1, \delta_2) \NONUM \\ 
														&~+ \min\left( M_{c1}(\delta_1, \delta_2), 2M_{c2}(\delta_1, \delta_2) \right), \label{eq:sum:form}
\end{align}
where
\begin{align}
  M_p(\delta_1, \delta_2) & = \CMI{\rX_{1p} ; \rY_1}{\rX_{1c}, \rX_{2c}, \rQ, \rG},  \NONUM \\
  M_{c1}(\delta_1, \delta_2) & = \CMI{\rX_{2c}, \rX_{1c} ; \rY_1 }{\rQ, \rG}, \NONUM \\
  M_{c2}(\delta_1, \delta_2) & = \CMI{\rX_{2c} ; \rY_1 }{\rX_{c1}, \rQ, \rG}. 
\end{align}
The derivation of $M_p(\delta_1, \delta_2)$, $M_{c1}(\delta_1, \delta_2)$, and
$M_{c2}(\delta_1, \delta_2)$ relies on the chain rule and the basic equalities~\eqref{eq:basic} as described below.

Starting from $M_{c1}(\delta_1, \delta_2)$, we have
\begin{align}
& M_{c1}(\delta_1, \delta_2) = \CMI{\rX_{2c}, \rX_{1c} ; \rY_1 }{\rQ, \rG} \NONUM \\
& =  p_d + p_c - p_cp_d  - \CENT{\rY_1}{\rX_{2c} ,
  \rX_{1c}, \rQ, \rG}. \NONUM \\
& = p_d + p_c - p_cp_d - \CENT{\rG_{11} \rX_1 \oplus \rG_{21} \rX_2 }{ \rX_{1c}, \rX_{2c}, \rQ, \rG}  \NONUM \\
&= p_d + p_c - p_cp_d  \NONUM \\
& -  \Big( p_d(1-p_c) \CENT{\rX_1}{ \rX_{1c}, \rQ} + p_c(1-p_d) \CENT{\rX_2}{ \rX_{2c}, \rQ}
   \nonumber \\
& + p_dp_c \CENT{\rX_1 \oplus \rX_2}{\rX_{2c}, \rX_{1c}, \rQ} \Big) \NONUM \\
& = p_d + p_c - p_cp_d - (p_d + p_c -2p_c p_d) \frac{\Cdel{1} + \Cdel{2}}{2} \NONUM \\
& - p_d p_c \gamma(\delta_1,
    \delta_2) \label{eq:Mc1}
\end{align}
where 
\begin{align}
\gamma(\delta_1, \delta_2) &:= \CENT{\rX_1 \oplus \rX_2}{ \rX_{2c}, \rX_{1c}, \rQ} \NONUM \\
  & = \frac{\delta_1}{2}\Cdel{2} + \frac{\delta_2}{2} \Cdel{1} + \frac{(2-\delta_1)(2-\delta_2)}{4} \ENT{p^*}, 
\end{align}
and $p^* = \frac{2-\delta_2-\delta_1}{(2-\delta_1)(2-\delta_2)}$. In addition, $\gamma(\delta_1,
  \delta_2) $ is upper bounded by $\Cdel{1} +\Cdel{2}$:
  \begin{align}
   \gamma(\delta_1, \delta_2) & \leq  \CENT{\rX_1}{ \rX_{1c}, \rQ} + \CENT{ \rX_2}{ \rX_{2c}, \rQ}  \NONUM \\
															&= \Cdel{1} + \Cdel{2}. \label{eq:gamma:1}
  \end{align}

Next for $M_{c2}(\delta_1, \delta_2)$ using the chain rule and~\eqref{eq:Mc1}, we get
\begin{align}
  M_{c2}(\delta_1, \delta_2) & = \CMI{\rX_{1c}, \rX_{2c} ; \rY_1 }{\rQ, \rG} - \CMI{\rX_{1c} ; \rY_1 }{\rQ, \rG} \NONUM \\
                      & = M_{c1}(\delta_1, \delta_2) - \CMIR{\rX_{1c}; \rG_{11} \rX_1\oplus \rG_{21}\rX_2}{\rQ, \rG}  \NONUM \\
 & =  M_{c1}(\delta_1, \delta_2) -  p_d(1-p_c) \left[ 1 - \frac{\Cdel{1}+\Cdel{2}}{2}\right] \NONUM \\
 & = p_c - (p_c - p_c p_d) \frac{\Cdel{1} + \Cdel{2}}{2} + p_dp_c \gamma(\delta_1, \delta_2). \label{eq:Mc2}
\end{align}

 Similarly for $M_p(\delta_1, \delta_2)$, we have
 \begin{align}
 & M_p(\delta_1, \delta_2) \NONUM \\
        & = \CMI{\rX_{1p} , \rX_{1c}, \rX_{2c}; \rY_1}{\rQ, \rG} - \CMI{\rX_{1c}, \rX_{2c} ; \rY_1 }{\rQ, \rG}
          \NONUM \\
        & = \CMI{\rX_{1p} , \rX_{1c};\rY_1}{\rQ, \rG} + \CMI{ \rX_{2c} ; \rY_1 }{ \rX_{1p} , \rX_{1c},\rQ, \rG} \NONUM \\
				& - M_{c1}(\delta_1, \delta_2) \NONUM \\
        & = p_d(1-p_c) + p_c \left[ 1 - \frac{\Cdel{1} + \Cdel{2}}{2}\right] -
          M_{c1}(\delta_1, \delta_2) \NONUM \\
        & = (p_d - 2p_cp_d) \frac{\Cdel{1} + \Cdel{2}}{2} + p_dp_c \gamma(\delta_1, \delta_2) \label{eq:Mp1}.
 \end{align}

Therefore, substituting \eqref{eq:Mc1},  \eqref{eq:Mc2}, and \eqref{eq:Mp1} into~\eqref{eq:sum:form}, we can get an
expression for $R_{sum}(\delta_1, \delta_2)$. If we let $\delta_2 = 1$, then $\Cdel{2} = 0$ and $\gamma(\delta_1,
\delta_2) = \Cdel{1}$.  Furthermore,  
\begin{align}
& R_{sum} \geq \max_{\delta_1\in [0, 1]}R_{sum}(\delta_1, 1) \NONUM \\
& = \max_{\Cdel{1} \in [0,1]} p_d C_{\delta_1} + \min\Big( p_d + p_c - p_d p_c - \frac{p_d + p_c}{2} C_{\delta_1} \, , \Big. \NONUM \\
&\Big. , \, 2p_c - (p_c + p_cp_d)C_{\delta_1}\Big) \NONUM \\
& = \max_{\Cdel{1} \in [0,1]}  \min\Big(p_d + p_c - p_d p_c + \frac{p_d - p_c}{2} C_{\delta_1} \, , \Big. \NONUM \\
&\Big. \, 2p_c - (p_c -p_d + p_cp_d)C_{\delta_1}\Big), \label{eq:sum:t1} 
\end{align}
which achieves the maximum value of $p_d + p_c - p_d p_c + \frac{p_d - p_c}{2} C_\delta^*$ at $C_{\delta_1} = C_{\delta}^*$.
So it is sufficient to show that the converse is true. 

If $M_{c1}(\delta_1, \delta_2) \geq 2M_{c2}(\delta_1, \delta_2)$, we have
\begin{align}
  R_{sum}(\delta_1, \delta_2) & =2M_p(\delta_1, \delta_2) + 2M_{c2}(\delta_1, \delta_2) \NONUM \\
   & = 2p_c - (p_c - p_d + p_d p_c) (\Cdel{1} + \Cdel{2}).
\end{align}
If $M_{c1}(\delta_1, \delta_2) < 2M_{c2}(\delta_1, \delta_2)$, we have
\begin{align}
& R_{sum}(\delta_1, \delta_2)   = 2M_p(\delta_1, \delta_2)  + M_{c1}(\delta_1, \delta_2) \NONUM \\
 & = p_d + p_c - p_d  p_c + (p_d - p_c-2p_dp_c) \frac{\Cdel{1} + \Cdel{2}}{2} \NONUM \\
& + p_d p_c \gamma(\delta_1,
    \delta_2) \NONUM \\
 &\leq p_d + p_c - p_d  p_c + \frac{p_d - p_c}{2} (\Cdel{1} + \Cdel{2}), \label{eq:sum:t2}
\end{align}
where~\eqref{eq:sum:t2} is due to~\eqref{eq:gamma:1}. 

Now, let $\Cdel{3} = \Cdel{1} + \Cdel{2}$. We have
\begin{align}
  R_{sum} &\leq \max_{\Cdel{3} \in [0, 2]} \min\Big( p_d + p_c - p_d  p_c + \frac{p_d - p_c}{2} \Cdel{3} \, , \Big. \NONUM \\
	& \Big. \,  2p_c - (p_c
            - p_d + p_d p_c) \Cdel{3}\Big). \label{eq:sum:t3}
\end{align}
Comparing~\eqref{eq:sum:t3} and~\eqref{eq:sum:t1}, they are in the same form except that~\eqref{eq:sum:t3} is over a
larger domain. However, \eqref{eq:sum:t3} achieves its maximum when $\Cdel{3} = C_\delta^*$, and $C_\delta^* \leq
1$ with the assumption of $\frac{p_d}{1+p_d} < p_c$. Therefore, Theorem~\ref{thm:1} holds.

\noindent {\bf HK Scheme with Time-Sharing:}

Let $\rQ$ be a random variable taking values in $\{1, 2\}$ with $\Prob{\rQ = i} = \lambda_i$ for $i = 1, 2$.
Coding scheme is generated based on

\begin{table}[h]
  \centering
  \begin{tabular}{|c|c|c|c|c|}
    \hline  $\rQ$ & $\rX_{1c}$    & $\rX_{1p}$      & $\rX_{2c}$   & $\rX_{2p}$ \\
    \hline   1  & $\Ber{\frac{\delta_1}{2}}$ & $\Ber{\frac{1-\delta_1}{2-\delta_1}}$ & $\Ber{\frac{1}{2}}$ & $\Ber{0}$ \\
    \hline  2 & $\Ber{\frac{1}{2}}$ & $\Ber{0}$ & $\Ber{\frac{\delta_2}{2}}$ & $\Ber{\frac{1-\delta_2}{2-\delta_2}}$  \\ 
    \hline
  \end{tabular}
\end{table}

The evaluation of achievable rate is similar to the previous case but it breaks the symmetric property in general. 
Define following short-hand notation:
\begin{align}
  C_{\delta_i} :=  \frac{2 - \delta_i}{2} \ENT{\frac{1}{2-\delta_i}} \qquad i = 1, 2
\end{align}

Any rate pair satisfying the following conditions is achievable for the common messages (see Appendix~\ref{Appendix:TSCOMMON} for the details). 
\begin{subequations} \label{eq:TS:common}
  \begin{align}
    R_{1c} &\leq p_c - \lambda_1 \Cdel{1} p_c - \lambda_2\Cdel{2} p_c p_d,  \\
    R_{2c} & \leq p_c - \lambda_2 \Cdel{2} p_c - \lambda_1 \Cdel{1} p_d p_c,  \\
    R_{1c} + R_{2c} & \leq   p_d + p_c - p_d p_c  - \\
		&~\max(\lambda_1 \Cdel{1} p_d + \lambda_2 \Cdel{2} p_c, \lambda_1\Cdel{1} p_c + \lambda_2 \Cdel{2} p_d  ) \NONUM
  \end{align}
\end{subequations}

Rates of the private messages are given by
\begin{align}
  R_{1p} & \leq \CMI{\rX_{1p} ; \rY_1}{\rX_{1c}, \rX_{2c}, \rQ, \rQ, \rG}  = \lambda_1 p_d \Cdel{1}, \\
  R_{2p} & \leq \CMI{\rX_{2p} ; \rY_2}{\rX_{1c}, \rX_{2c}, \rQ, \rQ, \rG}  = \lambda_2 p_d \Cdel{2}. 
\end{align}

Finally, we evaluate sum-rate with $\lambda_i=1/2$ and $\Cdel{1} = \Cdel{2} = C_{\delta}$:
\begin{align}
  & R_{sum}(C_{\delta}) = p_d C_{\delta} + \min( p_d + p_c - p_d p_c - \frac{p_d + p_c}{2} C_{\delta} \, \NONUM \\
	& , \, 2p_c - (p_c +
  p_cp_d)C_{\delta}) 
\end{align}

\noindent {\bf Some numeric results:}

For $p_d = 1$ and for $p_c \in [1/2, 1]$, we plotted the HK rate (optimized over $\delta$) in Fig.~\ref{Fig:Optimized}. 

\begin{figure}[h]
  \centering
  \includegraphics[width=\columnwidth]{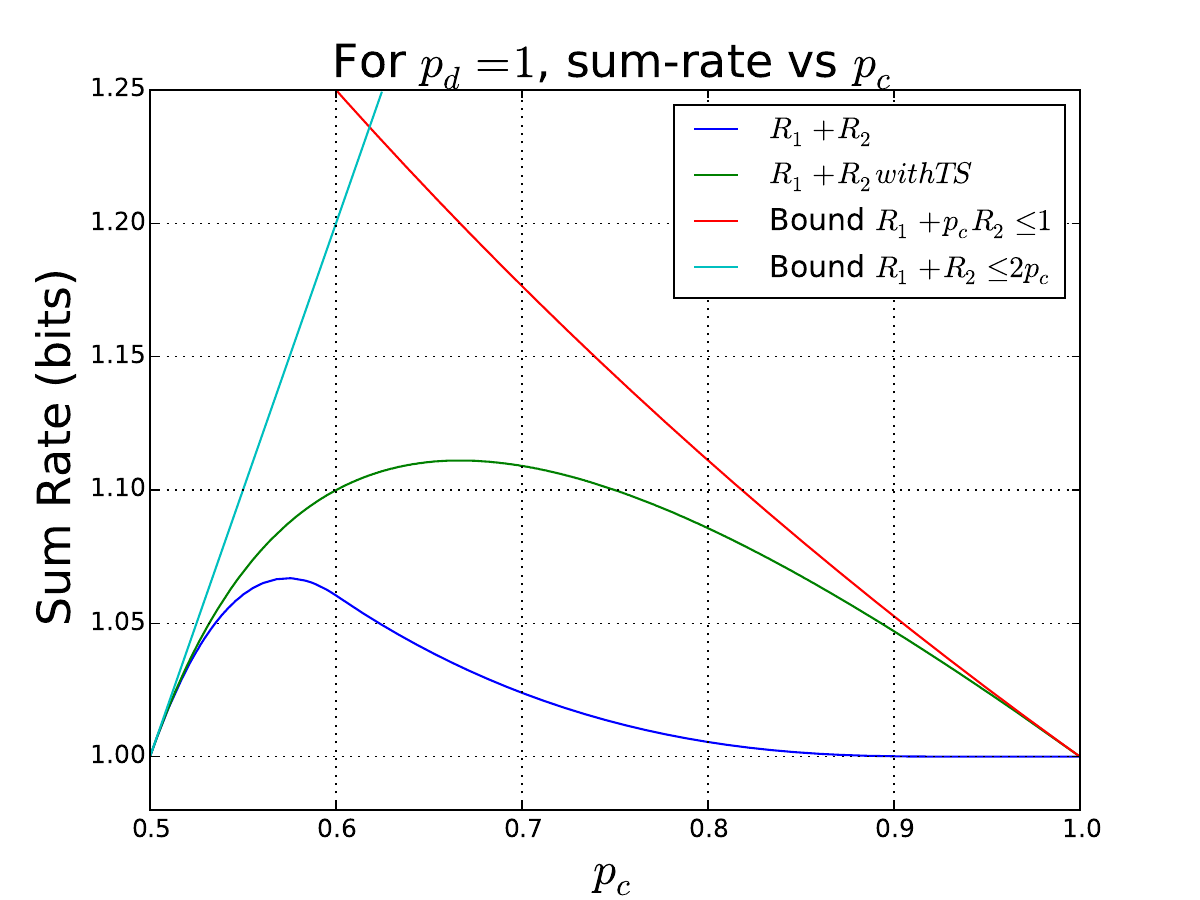}
  \label{Fig:Optimized}
  \caption{The optimized Han-Kobayashi scheme for $p_d = 1$ and for $p_c \in [1/2, 1]$.}
\end{figure}

%

\section{Conclusion}
\label{Sec:Conclusion}

We studied the capacity region of the two-user Binary Fading Interference Channel with no CSIT. We showed that under the weak and the moderate interference regimes, the entire capacity region is achieved by applying point-to-point erasure codes with appropriate rates at each transmitter, using either treat-interference-as-erasure or interference-decoding at each receiver, based on the channel parameters. For the moderate interference regime, we devised a modified Han-Kobayashi scheme suited for discrete memoryless channels enhanced by time-sharing. Our outer-bounds rely on two key lemmas, namely the Correlation Lemma and the Entropy Leakage Lemma.

\section*{Acknowledgement}

The work of A. S. Avestimehr and A. Vahid is in part supported by NSF Grants CAREER-0953117, CCF-1161720, NETS-1161904, AFOSR Young Investigator Program Award, ONR award N000141310094, and 2013 Qualcomm Innovation Fellowship.


\appendices

%
%
%
%

\section{Proof of \eqref{eq:TS:common}}
\label{Appendix:TSCOMMON}

\noindent For the virtual MAC at receiver 1:
\begin{align}
& \CMI{\rX_{1c}, \rX_{2c}; \rY_1}{\rQ, \rG}  \nonumber \\
& = \CENT{\rY_1}{\rQ, \rG} - \CENT{\rY_1}{\rX_{1c}, \rX_{2c}, \rQ, \rG} \NONUM \\
& = p_d + p_c - p_d p_c - \lambda_1 \CENT{\rY_1}{\rX_{1c}, \rX_{2c}, \rQ = 1, \rG} \NONUM \\
& ~+ \lambda_2 \CENT{\rY_1}{\rX_{1c}, \rX_{2c}, \rQ=2, \rG} \NONUM \\
& = p_d + p_c - p_d p_c - \lambda_1 \Cdel{1} p_d - \lambda_2 \Cdel{2} p_c 
\end{align}
\begin{align}
& \CMI{\rX_{2c}; \rY_1}{\rX_{1c}, \rQ, \rG} \nonumber \\
& = \CMI{\rX_{1c}, \rX_{2c}; \rY_1}{ \rQ, \rG}  - \CMI{\rX_{1c}; \rY_1}{ \rQ, \rG} \NONUM \\
& = \CMI{\rX_{1c}, \rX_{2c}; \rY_1}{ \rQ, \rG}  - \lambda_1 \CMI{\rX_{1c}; \rY_1}{ \rQ=1, \rG} \NONUM \\
&~+ \lambda_2 \CMI{\rX_{1c}; \rY_1}{ \rQ=2, \rG} \NONUM \\
& = \CMI{\rX_{1c}, \rX_{2c}; \rY_1}{ \rQ, \rG} - \lambda_1 p_d(1-p_c)(1-\Cdel{1}) \NONUM\\
&~- \lambda_2 p_d (1-p_c) \NONUM \\
& = p_d + p_c - p_d p_c - \lambda_1 \Cdel{1} p_d  - \lambda_2 \Cdel{2} p_c \NONUM\\
&~- \lambda_1 (1-\Cdel{1}) p_d(1-p_c) - \lambda_2 p_d (1-p_c) \NONUM \\
& = p_c - \lambda_1 \Cdel{1}  p_d  - \lambda_2 \Cdel{2}  p_c + \lambda_1 \Cdel{1}  p_d(1-p_c) \NONUM \\
& = p_c - \lambda_2 \Cdel{2} p_c - \lambda_1 \Cdel{1} p_d p_c 
\end{align}
\begin{align}
& \CMI{\rX_{1c}; \rY_1}{\rX_{2c}, \rQ, \rG} \nonumber \\
& = \CMI{\rX_{1c}, \rX_{2c} ; \rY_1}{\rQ, \rG} - \CMI{\rX_{2c}; \rY_1}{\rQ, \rG} \NONUM \\
& = \CMI{\rX_{1c}, \rX_{2c} ; \rY_1}{\rQ, \rG} - \lambda_1 \CMI{\rX_{2c}; \rY_1}{\rQ=1, \rG} \NONUM \\
&~- \lambda_2 \CMI{\rX_{2c}; \rY_1}{\rQ=2, \rG} \NONUM \\
& = \CMI{\rX_{1c}, \rX_{2c} ; \rY_1}{\rQ, \rG} - \lambda_1 p_c(1-p_d) \NONUM \\
&~- \lambda_2 p_c(1-p_d)(1 - \Cdel{2}) \NONUM \\
& = p_d + p_c - p_d p_c - \lambda_1 \Cdel{1} p_d - \lambda_2 \Cdel{2} p_c  \NONUM \\
&~- \lambda_1 p_c(1-p_d) - \lambda_2 p_c(1-p_d)(1 - \Cdel{2}) \NONUM \\
& = p_d - \lambda_1 \Cdel{1} p_d  - \lambda_2 \Cdel{2}  p_c + \lambda_2 \Cdel{2} p_c(1-p_d) \NONUM \\
& = p_d - \lambda_1 \Cdel{1}  p_d  - \lambda_2 \Cdel{2} p_cp_d 
\end{align}

\begin{figure}[ht]
  \centering
  \subfigure[]{\includegraphics[width=.5\textwidth]{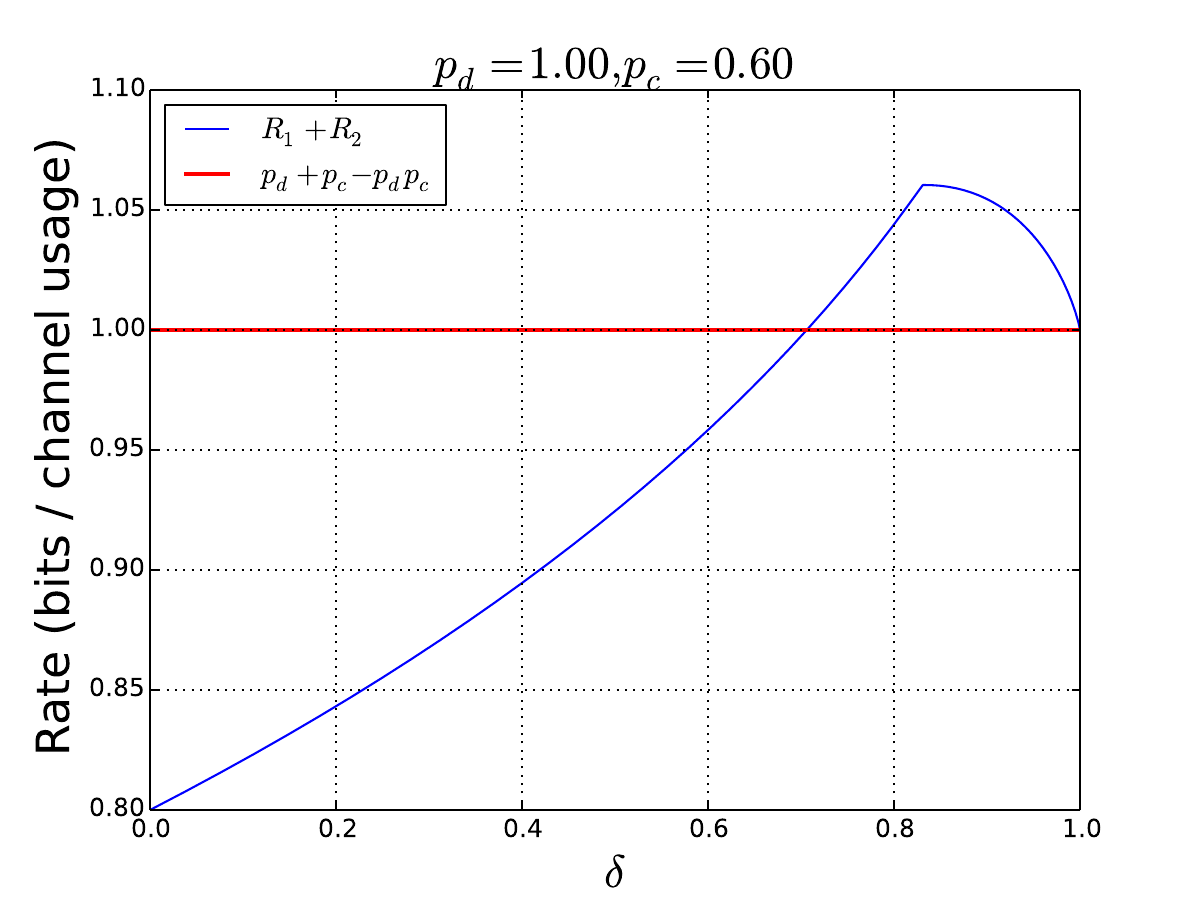}}
  \subfigure[]{\includegraphics[width=.5\textwidth]{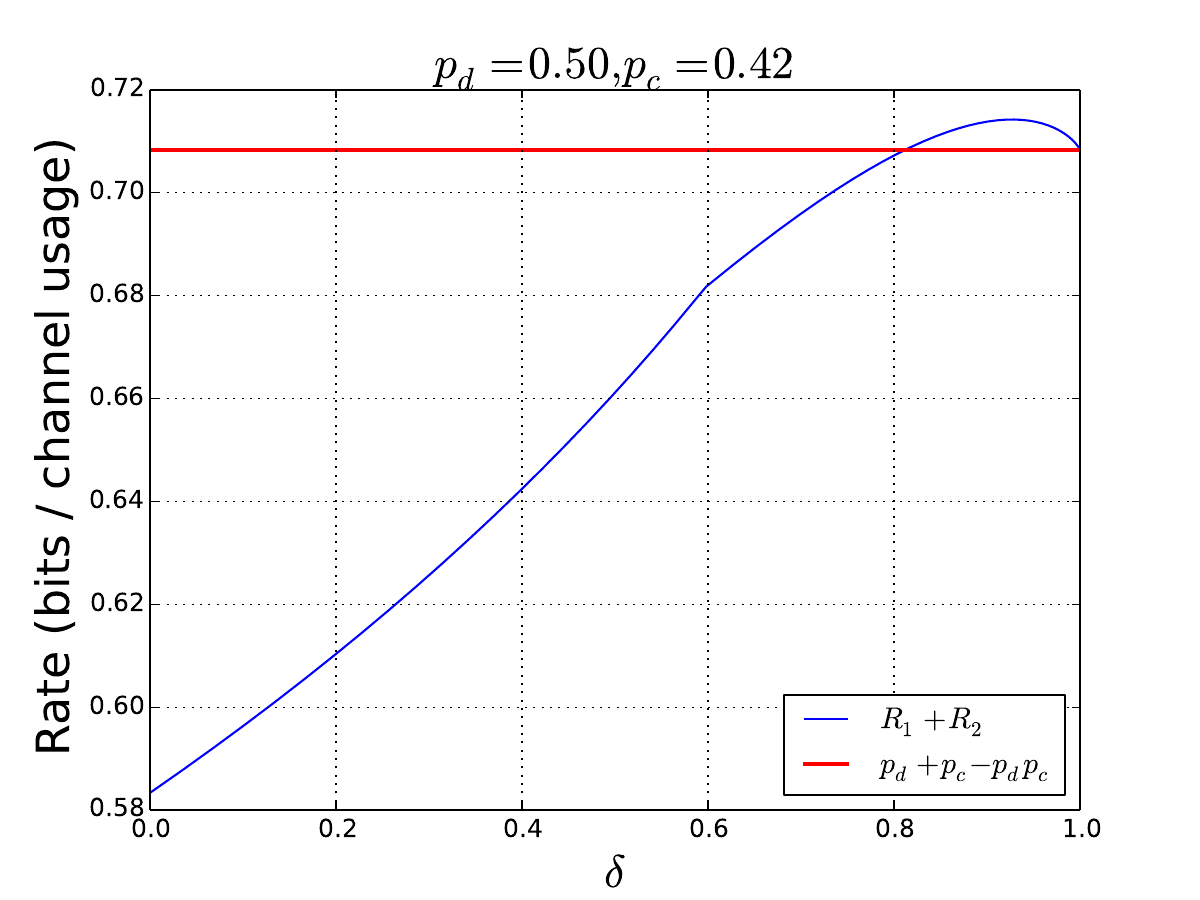}}
  \caption{Numerical Results for the achievable HK Rate}
  \label{fig:1}
\end{figure}

\noindent For the virtual MAC at receiver 2:
\begin{align}
& \CMI{\rX_{1c}, \rX_{2c}; \rY_2}{\rQ, \rG}  \nonumber \\
& = \CENT{\rY_2}{\rQ, \rG}  - \CENT{\rY_2}{\rX_{1c}, \rX_{2c}, \rQ, \rG} \NONUM \\
& = p_d + p_c - p_d p_c  - \lambda_1\CENT{\rY_2}{\rX_{1c}, \rX_{2c}, \rQ=1, \rG} \NONUM \\
&~- \lambda_2\CENT{\rY_2}{\rX_{1c}, \rX_{2c}, \rQ=2, \rG} \NONUM \\
& = p_d + p_c - p_d p_c - \lambda_1\Cdel{1} p_c - \lambda_2 \Cdel{2} p_d 
\end{align}
\begin{align}
& \CMI{ \rX_{2c}; \rY_2}{\rX_{1c}, \rQ, \rG}  \nonumber \\  
& = \CMI{\rX_{1c}, \rX_{2c}; \rY_2}{\rQ, \rG} - \CMI{\rX_{1c} ; \rY_2}{\rQ, \rG} \NONUM \\
& = p_d + p_c - p_d p_c - \lambda_1\Cdel{1} p_c - \lambda_2 \Cdel{2} p_d \NONUM \\
&~- \lambda_1p_c(1-p_d)(1-\Cdel{1}) - \lambda_2p_c(1-p_d) \NONUM \\
& = p_d - \lambda_1\Cdel{1} p_c - \lambda_2 \Cdel{2} p_d + \lambda_1p_c(1-p_d)\Cdel{1} \NONUM \\
& = p_d - \lambda_2 \Cdel{2} p_d - \lambda_1\Cdel{1} p_c p_d 
\end{align}
\begin{align}
& \CMI{ \rX_{1c}; \rY_2}{\rX_{2c}, \rQ, \rG}  \nonumber \\  
& = \CMI{\rX_{1c}, \rX_{2c}; \rY_2}{\rQ, \rG} - \CMI{\rX_{2c} ; \rY_2}{\rQ, \rG} \NONUM \\
& = p_d + p_c - p_d p_c - \lambda_1\Cdel{1} p_c - \lambda_2 \Cdel{2} p_d - \lambda_1p_d(1-p_c) \NONUM \\
&~- \lambda_2p_d(1-p_c)(1-\Cdel{2}) \NONUM \\
& = p_c - \lambda_1\Cdel{1} p_c - \lambda_2 \Cdel{2} p_d + \lambda_2p_d(1-p_c)\Cdel{2} \NONUM \\
& = p_c - \lambda_1 \Cdel{1} p_c - \lambda_2\Cdel{2} p_c p_d 
\end{align}

We reduce the number of constraints by performing the following comparisons.

\noindent $\bullet$ Comparing the two dominant faces:
\begin{align}
& \CMI{\rX_{1c}, \rX_{2c}; \rY_1}{\rQ, \rG} - \CMI{\rX_{1c}, \rX_{2c}; \rY_2}{\rQ, \rG}  \nonumber \\
& = p_d + p_c - p_d p_c - \lambda_1 \Cdel{1} p_d - \lambda_2 \Cdel{2} p_c  \NONUM\\ 
&~- \Big( p_d + p_c - p_d p_c - \lambda_1\Cdel{1} p_c - \lambda_2 \Cdel{2} p_d  \Big) \NONUM \\
& = (p_c - p_d) (\lambda_1\Cdel{1} - \lambda_2 \Cdel{2}).
\end{align}
Thus the sign is determined by $\left( \lambda_1\Cdel{1} - \lambda_2 \Cdel{2} \right)$. 

\noindent $\bullet$ Comparing single-user bounds for user 1's common message:
\begin{align}
& \CMI{\rX_{1c}; \rY_1}{\rX_{2c}, \rQ, \rG}  - \CMI{ \rX_{1c}; \rY_2}{\rX_{2c}, \rQ, \rG} \nonumber \\
& = p_d - \lambda_1 \Cdel{1}  p_d  - \lambda_2 \Cdel{2} p_cp_d  \NONUM \\
&~- \Big( p_c - \lambda_1 \Cdel{1} p_c - \lambda_2\Cdel{2} p_c p_d \Big) \NONUM \\
& = (p_d - p_c)(1 - \lambda_1 \Cdel{1}) \geq 0.
\end{align}
So only $\CMI{ \rX_{1c}; \rY_2}{\rX_{2c}, \rQ, \rG}$ matters. 

\noindent $\bullet$ Comparing single-user bounds for user 2's common message:
\begin{align}
& \CMI{\rX_{2c}; \rY_1}{\rX_{1c}, \rQ, \rG}  - \CMI{ \rX_{2c}; \rY_2}{\rX_{1c}, \rQ, \rG} \nonumber \\
& = p_c - \lambda_2 \Cdel{2} p_c - \lambda_1 \Cdel{1} p_d p_c \NONUM \\
&~- \Big( p_d - \lambda_2 \Cdel{2} p_d - \lambda_1\Cdel{1} p_c p_d  \Big) \NONUM \\
& = (p_c - p_d)(1 - \lambda_2\Cdel{2}) \leq 0. 
\end{align}
So only $\CMI{\rX_{2c}; \rY_1}{\rX_{1c}, \rQ, \rG}$ matters. 

In summary, the following rates of common messages are achievable:
\begin{subequations}
\begin{align}
  R_{1c} &\leq p_c - \lambda_1 \Cdel{1} p_c - \lambda_2\Cdel{2} p_c p_d  \\
  R_{2c} & \leq p_c - \lambda_2 \Cdel{2} p_c - \lambda_1 \Cdel{1} p_d p_c  \\
R_{1c} + R_{2c} & \leq   p_d + p_c - p_d p_c  - \\
&~\max(\lambda_1 \Cdel{1} p_d + \lambda_2 \Cdel{2} p_c, \lambda_1\Cdel{1} p_c + \lambda_2 \Cdel{2} p_d  ) \NONUM
\end{align}
\end{subequations}

First, we evaluate cases where $p_d-p_c \leq p_d p_c$. In particular, Fig.~\ref{fig:1} shows the achievable rate vs
splitting parameter $\delta$ for $(p_d, p_c)
= (1, 0.6)$ and $(1/2, 5/12)$.

\bibliographystyle{ieeetr}
\bibliography{bib_misobc}

\begin{IEEEbiography} [{\includegraphics[width=25mm]{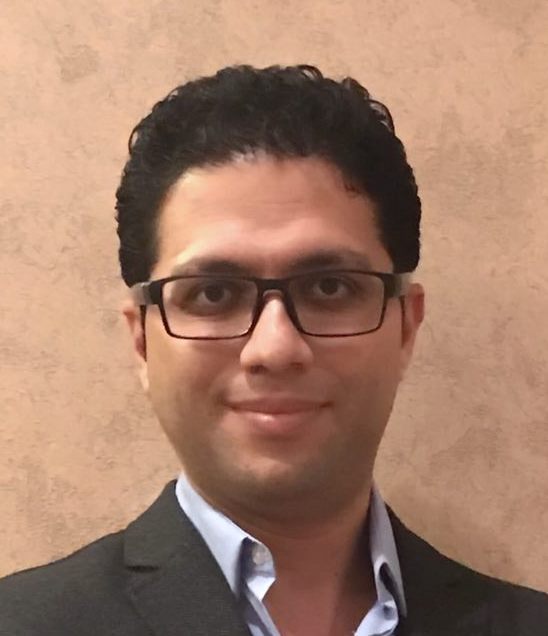}}] {Alireza Vahid} received his Ph.D. and M.Sc. degrees in Electrical and Computer Engineering both from Cornell University, Ithaca, NY, in 2015 and 2012 respectively. He obtained his B.Sc. degree in Electrical Engineering from Sharif University of Technology, Tehran, Iran, in 2009. He is currently a postdoctoral scholar at Information Initiative at Duke University, Durham, NC. His research interests include network information theory, wireless communications, coding theory, and data storage.

Dr. Vahid received the 2015 Outstanding PhD Thesis Research Award at Cornell University. He also received the Director's Ph.D. Teaching Assistant Award in 2010, Jacobs Scholar Fellowship in 2009, and Qualcomm Innovation Fellowship in 2013.
\end{IEEEbiography}

\begin{IEEEbiography} [{\includegraphics[width=25mm]{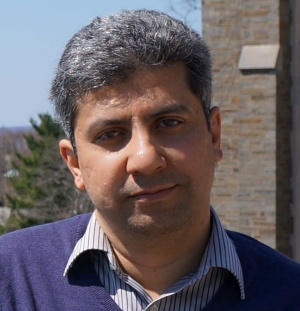}}] {Mohammad Ali Maddah-Ali} (S'03-M'08) received the B.Sc. degree from Isfahan University of Technology, and the M.A.Sc. degree from the University of Tehran, both in electrical engineering. From 2002 to 2007, he was with the Coding and Signal Transmission Laboratory (CST Lab), Department of Electrical and Computer Engineering, University of Waterloo, Canada, working toward the Ph.D. degree. From 2007 to 2008, he worked at the Wireless Technology Laboratories, Nortel Networks, Ottawa, ON, Canada.  From 2008 to 2010, he was a post-doctoral fellow in the Department of Electrical Engineering and Computer Sciences at the University of California at Berkeley. Then, he joined Bell Labs, Holmdel, NJ, as a communication research scientist. Recently, he started working at Sharif University of Technology, as a faculty member. 

Dr. Maddah-Ali is a recipient of NSERC Postdoctoral Fellowship in 2007, a best paper award from IEEE International Conference on Communications (ICC) in 2014, the IEEE Communications Society and IEEE Information Theory Society Joint Paper Award in 2015, and the IEEE Information Theory Society Joint Paper Award in 2016. 
\end{IEEEbiography}

\begin{IEEEbiography} [{\includegraphics[width=25mm]{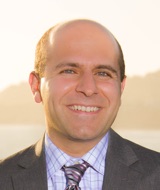}}] {A. Salman Avestimehr} (S'03-M'08-SM’16) is an Associate Professor at the Electrical Engineering Department of University of Southern California. He received his Ph.D. in 2008 and M.S. degree in 2005 in Electrical Engineering and Computer Science, both from the University of California, Berkeley. Prior to that, he obtained his B.S. in Electrical Engineering from Sharif University of Technology in 2003. He was an Assistant Professor at the ECE school of Cornell University from 2009 to 2013. He was also a postdoctoral scholar at the Center for the Mathematics of Information (CMI) at Caltech in 2008. His research interests include information theory, the theory of communications, and their applications.

Dr. Avestimehr has received a number of awards, including the Communications Society and Information Theory Society Joint Paper Award in 2013, the Presidential Early Career Award for Scientists and Engineers (PECASE) in 2011 for "pushing the frontiers of information theory through its extension to complex wireless information networks", the Young Investigator Program (YIP) award from the U. S. Air Force Office of Scientific Research in 2011, the National Science Foundation CAREER award in 2010, and the David J. Sakrison Memorial Prize in 2008. He is currently an Associate Editor for the IEEE Transactions on Information Theory.
\end{IEEEbiography}

\begin{IEEEbiography} [{\includegraphics[width=25mm]{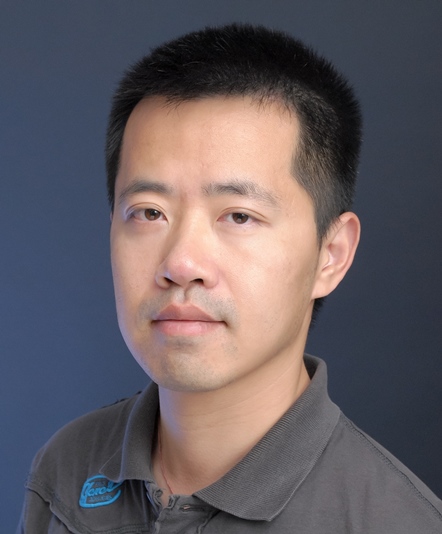}}] {Yan Zhu} received his B.E. and M.S. degrees from Tsinghua University, Beijing, China, in 2002 and 2005, respectively, and his Ph.D. degree from Northwestern University, Evanston, IL, in 2010, all in electrical engineering. From 2010 to 2015, He worked at Broadcom Inc. as a scientist, design staff. He joined Calterah Inc. as a chief system architecturer since 2016. His research interests include wireless communications, information theory, communication network and signal processing. He is a co-recipient of the 2010 IEEE Marconi Prize Paper Award in Wireless Communications (with D. Guo and M. L. Honig).
\end{IEEEbiography}

\end{document}